\def\msk{{\mathcal S_{n,k}}}
\def\msn{{\mathcal S_{n,0}}}
\def\mss{{\mathcal S_{n,{S[t]}}}}
\newcommand{\mP}{\mathbb{P}}
\newcommand{\mE}{\mathbb{E}}
\newcommand{\nn}{\nonumber}
\newtheorem{theorem}{\textbf{Theorem}}
\newtheorem{lemma}{\textbf{Lemma}}
\begin{document}
%\ninept
%
\title{Quickest Anomaly Detection in Sensor Networks With Unlabeled Samples}
\author{Zhongchang Sun \quad Shaofeng Zou
	\thanks{ This paper was presented in part at the 2021 IEEE International Symposium on Information Theory \cite{sun2021dynamic}.}
	\thanks{Zhongchang Sun and Shaofeng Zou are with the Department of Electrical Engineering, University at Buffalo, Buffalo, NY 14228 USA (e-mail: \href{mailto:zhongcha@buffalo.edu}{zhongcha@buffalo.edu}, \href{mailto:szou3@buffalo.edu}{szou3@buffalo.edu}).}
}
\maketitle
\begin{abstract}
The problem of quickest anomaly detection in networks with unlabeled samples is studied.  At some unknown time, an anomaly emerges in the network and changes the data-generating distribution of some unknown sensor. The data vector received by the fusion center at each time step undergoes some unknown and arbitrary permutation of its entries (unlabeled samples). The goal of the fusion center is to detect the anomaly with minimal detection delay subject to false alarm constraints.
With unlabeled samples, existing approaches that combines local cumulative sum (CuSum) statistics cannot be used anymore. Several major questions include whether detection is still possible without the label information, if so, what is the fundamental limit and how to achieve that.  
Two cases with static and dynamic anomaly are investigated, where the sensor affected by the anomaly may or may not change with time. For the two cases, practical algorithms based on the ideas of mixture likelihood ratio and/or maximum likelihood estimate are constructed. Their average detection delays and false alarm rates are theoretically characterized. Universal lower bounds on the average detection delay for a given false alarm rate are also derived, which further demonstrate the asymptotic optimality of the two algorithms. 
%The static anomaly is first investigated, where the sensor affected by the anomaly does not change with time. A generalized mixture CuSum algorithm is constructed and is shown to be second-order asymptotically optimal. Then, the dynamic setting with a moving anomaly is studied, where the sensor affected by the anomaly changes with time. A weighted mixture CuSum algorithm is constructed and is further shown to be asymptotically optimal. 
\end{abstract}
\begin{IEEEkeywords}
	Quickest change detection, unlabeled samples, permuted samples,  asymptotically optimal, fundamental limits.
\end{IEEEkeywords}

\section{Introduction}

In large-scale sensor networks, samples may lack of label information such as identity due to, e.g.,  malicious attacks and limited communication resources. For example, 
wireless ad-hoc sensor networks are usually vulnerable to spoofing attacks \cite{humphreys2008assessing}, and samples received by the fusion center may then lose their label information. Furthermore, in large-scale Internet-of-things (IoT) networks, where devices are commonly small and low-cost sensing devices powered by battery with limited communication bandwidth, and are usually deployed in a massive scale, the communication overhead of identifying individual sensors increases drastically as the number of sensors grows \cite{keller2009identity}. However, these battery-powered IoT devices are usually expected to survive for years without battery change. In this case, message delivered to the fusion center may be constrained not to contain the identity information. For the same reason, the data transmitted to the fusion center is usually quantized to be in a finite alphabet. Furthermore, in social sensing applications, participants may choose to be anonymous in order to protect privacy, i.e., sharing the data without including identity information. Motivated by these applications, there is a recent surge of interest in the problem of signal processing with unlabeled data (see e.g., \cite{anonymous,li2022bandwidth,stefano2019unlabeled,stefano2020bits,unnikrishnan2018unlabeled,haghighatshoar2017signal,abid2017linear,emiya2014compressed,liu2018signal,pananjady2017linear,elhami2017unlabeled,lu2008theory,wang2018signal,sun2020tspanonymous}), which refers to various signal processing problems where the data vector undergoes an unknown permutation of its entries, and the original position of each datum in the vector is unknown.
%A second motivating example is in social sensing, where human participants tend to share their data without revealing their identities.

In this paper, we investigate the problem of quickest anomaly detection in sensor networks with unlabeled samples. Specifically, at some unknown time, an anomaly emerges in the network and leads to a change in the data-generating distribution of some unknown sensor. The fusion center sequentially receives unlabeled (arbitrarily permuted) samples from all the sensors at each time step. The goal of the fusion center is to detect the anomaly as quickly as possible, subject to false alarm constraints. This problem is of particular relevance to applications where an anomaly affects some sensor in the network, and the affected sensor may change over time \cite{rovatsos2021quickest}, e.g., surveillance system, intrusion detection, environmental change (air/water quality) detection, rumor detection, and seismic wave detection.

\subsection{Contributions and Major Challenges}
The first part of this paper focuses on the static anomaly, where the sensor affected by the anomaly does not change with time, but which sensor is affected is still unknown. We consider the detection delay under the worst-case affected sensor. The goal here is to minimize the detection delay subject to false alarm constraints. 
The major challenges here are two-fold. First of all, the labels of the samples are unknown, and is time-varying. Second, even if the labels are known, i.e., each sample is associated with its sensor,  the sensor the anomaly affects is still unknown.
For this problem, we construct a generalized mixture CuSum (GM-CuSum) algorithm. The basic idea is to estimate the unknown identity of the affected sensor using the maximum likelihood estimate (MLE), and further employ a mixture likelihood w.r.t. all possible labels. We prove that the GM-CuSum is second-order asymptotically optimal. %under Lorden's criterion \cite{lorden1971procedures}. %The major challenge in the analysis is due to the fact that neither the label nor the identity of the affected sensor is known. %Also, To prove the asymptotic optimality, we develop a novel connection between our QCD problem with unlabeled samples and a simple QCD problem with mixture pre- and post-change distributions. 

The second part of this paper focuses on a general and more challenging setting with dynamic anomaly, where the sensor affected by the anomaly changes with time. Here, we refer to the sequence of sensors affected by the anomaly over time as the trajectory of the anomaly. We consider the detection delay under the worst-case trajectory. 
Compared to the static setting, the additional challenge is that the affected sensor changes with time, and thus the change is not persistent at any particular sensor. Therefore, estimating the identity of the affected sensor over time is not applicable. We then propose a Bayesian approach to address the challenge raised by the unknown trajectory of the anomaly, and find the optimal weight to construct a weighted mixture CuSum algorithm. We prove that the weighted mixture CuSum algorithm is first-order asymptotically optimal. %under Lorden's criterion\cite{lorden1971procedures}. 
%In our optimality analysis, we construct several inter-related QCD problems with mixture pre- and/or post-change distributions,  connect the performance of a stopping time for these QCD problems, and further employ the Law of Large Numbers to prove the asymptotic optimality of our proposed algorithm. 

We also conduct extensive numerical results to demonstrate the performance of our proposed algorithms. The numerical results show that for the static setting, our GM-CuSum algorithm outperforms a heuristic Bayesian mixture CuSum algorithm; the optimal weighted mixture CuSum algorithm also performs well for the static setting; and for the dynamic setting, our optimal weighted mixture CuSum algorithm outperforms an arbitrarily weighted one and the GM-CuSum algorithm. These numerical results validate our theoretical optimality results.
\subsection{Related Work}
The quickest change detection (QCD) problem in sensor networks with labeled samples was extensively  studied  in the literature, e.g., \cite{tartakovsky2004change,tartakovsky2006novel,mei2010efficient,xie2013sequential,fellouris2016second,raghavan2010quickest, hadjiliadis2009one,ludkovski2012bayesian,zou2020dynamic, veeravalli2001decentralized,tartakovsky2008distributed,zou2019distributed,xie2021sequential} where the fusion center knows the identity of each sample, i.e., knows which sensor that each sample is from. Therefore, one CuSum algorithm can be implemented at each sensor and then be combined to make the decision. This type of algorithms were shown to be asymptotically optimal for various settings. In this paper, we investigate the setting with unlabeled samples, where at each time step samples are arbitrarily permuted, and the permutation is time-varying. The fusion center does not know which sensor each sample comes from, and then cannot implement a CuSum algorithm for each sensor. 

Various learning and inference problems with unlabeled data has been studied in the literature \cite{anonymous,li2022bandwidth,stefano2019unlabeled,stefano2020bits,unnikrishnan2018unlabeled,haghighatshoar2017signal,abid2017linear,emiya2014compressed,liu2018signal,pananjady2017linear,elhami2017unlabeled,lu2008theory,wang2018signal,sun2020tspanonymous}, which mainly focus on the offline setting with non-sequential data. Here we only review several closely related ones on detection problems. 
In \cite{stefano2019unlabeled}, hypothesis testing with unlabeled samples are studied, where two practical algorithms, the unlabeled log-likelihood ratio test and the generalized likelihood ratio test are proposed. A more specific problem is studied in \cite{stefano2020bits} where samples follow Bernoulli distribution and an approximated log-likelihood test based on the central limit theorem was proposed. In \cite{anonymous}, the binary hypothesis testing problem with unlabeled samples was studied, and an optimal mixture likelihood ratio test (MLRT) was developed. 
% In our paper, we consider the QCD problem with sequential anonymized samples and the group assignments change with time. We use the false alarm rate and average detection delay to evaluate the performance of an algorithm, which requires more involved analysis than the binary hypothesis testing.
In \cite{li2022bandwidth}, the bandwidth-constrained QCD problem with unlabeled samples was investigated, where each sensor sends 1-bit quantized feedback to the fusion center. 
%In our paper, we focus on the case where the fusion center can only collect samples instead of local decisions anonymously from each sensor. 
In \cite{sun2020tspanonymous}, the QCD problem with unlabeled samples was studied where the change affects all the sensors simultaneously. %Note that in \cite{sun2020tspanonymous}, it is assumed that all the sensors undergo a change in the data-generating distribution under the alternative hypothesis (after the change). 
In this paper, we investigate a more practical scenario where an anomaly may not affect all the sensors, which is of particular interest in the distributed setting, and the anomaly may also be dynamic, and affect different sensors at different times, e.g., a moving target in surveillance system. 

Existing studies of quickly detecting a dynamic change mostly focus on the labeled setting, e.g., \cite{zou2020dynamic,rovatsos2019dynamic,georgios2020movinganomaly}. 
Our problem is similar to the one in \cite{georgios2020movinganomaly} but we focus on unlabeled samples. Our major technical challenge is due to the additional ambiguity of unknown labels.
The QCD problem with a slowly changing post-change distribution was studied in \cite{zou2018quickest,zhang2019quickest}, whereas in this paper, the anomaly can move arbitrarily fast.

With unlabeled samples, our problem is also related to the composite QCD problem with unknown pre- and post-change parameters e.g., \cite{siegmund1995using, lai1998information, xie2013sequential, banerjee2015composite}. Our work is different from the existing literature. Due to unlabeled samples and the dynamic nature of the anomaly,  the unknown parameter, i.e., the identify and the label of the affected sensor, is time-varying.  Therefore, the generalized likelihood approach which estimates the unknown parameters using their MLEs may not perform well. Moreover, unlike studies in \cite{siegmund1995using, lai1998information, banerjee2015composite} where the distributions are assumed to belong to the exponential family, we do not have any assumptions on the distributions.

%

%\subsection{Paper Organization}
%The rest of this paper is organized as follows. In Section \ref{sec:problemmodel}, we present the problem formulation. In Section \ref{sec:static}, we investigate the setting where the anomaly is static, propose a GM-CuSum algorithm and show that it is second-order asymptotically optimal. In Section \ref{sec:dynamic}, we investigate the dynamic setting where the affected sensor changes with time. We propose a weighted mixture CuSum algorithm, and further show its first-order asymptotic optimality. In Section \ref{sec:numerical}, we present the numerical results to validate our theoretical assertion. In Section \ref{sec:conclusion}, we present some concluding remarks.

\section{Problem formulation}\label{sec:problemmodel}
Consider a network monitored in real time by a set of $n$ heterogeneous sensors. These sensors can be clustered into $K$ types, and each type $k$ has $n_k$ sensors, $1\leq k\leq K$. The data generating distributions of samples from type $k$ sensors are denoted by $p_{\theta,k}$,  $\theta\in \{0,1\}$, which are known to the fusion center. At some unknown time $\nu$, an anomaly emerges in the network, and changes the data-generating distributions of the sensors. If a sensor of type $k$ is affected by the anomaly, then its samples are generated by $p_{1,k}$, otherwise, by $p_{0,k}$. The goal is to detect the anomaly as quickly as possible subject to false alarm constraints. We focus on the case with unlabeled samples, where 
the data vector  at each time step undergoes an unknown permutation of its entries, and the original position of each datum in the vector is unknown to 
the fusion center. In other words, the fusion center does not know which type of sensors that each sample comes from, and therefore, does not know the sample's exact data-generating distribution. 

%	
%\zou{	
%%	 need to know the distribution of each sample $X_i[t]$, $1\leq i\leq n$, i.e., which group the sample is from. Denote by $\sigma_t^k:\{1,\ldots,n\}\rightarrow \{1,\ldots,K,k+K\}$, which labels the distribution (i.e., group) that each sample $X_i[t]$ is generated from.
%%Therefore, 
%%	Let $S[t]$ denote the type of the sensor affected by the anomaly at time $t$, and for $t<\nu$, i.e., before the anomaly emerges, $S[t]=0$. 
%	
%%	Suppose a sensor of type $S[t]$ is affected by the anomaly at time $t\geq \nu$. To characterize the distribution of $X^n[t]$, we introduce a label  For notational convenience, we let $\sigma_t^k$ denote thee label when there is no anomaly.
%%Due to the \textit{anonymity}, $\sigma_t^k$ is \textit{unknown} to the decision maker, and \textit{changes} with time $t$. 
%	
%Before the anomaly emerges, i.e., $t<\nu$, there are $n_k$ sensors in each group $1\leq k\leq K$.  Then $\sigma_t^0:\{1,\ldots,n\}\rightarrow \{1,\ldots,K\}$, and there are $\left(\substack{n\\n_1,\ldots, n_K}\right)$ possible $\sigma_t^0$ satisfying $|\{i:\sigma_t^{0}(i)=k\}|=n_k,\forall k=1,\ldots,K$. }

After the anomaly emerges, one sensor of an unknown type is affected by the anomaly. 
Based on whether the sensor is affected by the anomaly and the type of the sensor, we rearrange the sensors into $2K$ groups.  The first $K$ groups consists of sensors that are not affected by the anomaly; and the second $K$ groups consists of sensors that are affected by the anomaly. Specifically, for sensors in group $1\leq k\leq K$, their samples are generated by $p_{0,k}$, and for sensors in group  $K<k\leq 2K$, their samples samples are generated by $p_{1,k-K}$.

Denote by $X^n[t]=\{X_1[t],\ldots,X_n[t]\}$ the $n$ arbitrarily permuted samples at time $t$ received by the fusion center. We assume that $X_1[t],\ldots,X_n[t]$ are independent, and $X^n[{t_1}]$ is independent from $X^n[{t_2}]$ for any $t_1\neq t_2$.  Note that $X_i[t]$ is not necessarily the sample from sensor $i$ since samples are permuted/unlabeled. 

Let $\mathcal{K} = \{1,2,\cdots,K\}$. Denote by $S[t]\in \mathcal{K}\cup\{0\}$ the type of the affected sensor at $t$. 
For notational convenience, we use $S[t] = 0$ to denote the case when there is no anomaly, i.e., $t<\nu$. Let $\bm{S} \triangleq \{S[t]\}^\infty_{t=1}$ denote the trajectory of the anomaly. Here $\bm{S}$ is \textit{unknown} to the decision maker.
Even if the trajectory of the anomaly $\bm{S}$ is given, the distribution of $X^n[t]$ still cannot be fully specified due to lack of label information. To characterize the distribution of $X^n[t]$, we define a label function $\sigma_t^{S[t]}:\{1,\ldots,n\}\rightarrow \{1,\ldots,K,S[t]+K\}$. This function associates sample $X_i[t]$, $1\leq i\leq n$, to group $j$ for some $j\in\{1,2,\ldots,K,K+S[t] \}$, i.e., specifies the probability distribution of $X_i[t]$. Specifically, if $\sigma_t^{S[t]} (i)=j$, then
\begin{flalign}
	X_i[t]\sim \left\{
	\begin{array}{ll}p_{0,j}, &\text{ if } 1\leq j\leq K,\\
		p_{1,j-K}, &\text{ if } K<j\leq 2K.\end{array}
	\right.
\end{flalign}
Here $\sigma_t^{S[t]}$ can be interpreted as the inverse of the permutation applied to the data vector.
We further note that $\sigma_t^{S[t]}$ is \textit{unknown} to the decision maker, and changes with time.
%Due to the anonymity, $\sigma^{S[t]}_t$ is \textit{unknown} to the fusion center. We note that even before the anomaly emerges, $\sigma_t^{S[t]}$ may change with time, i.e., $\sigma_{t_1}^{S[t]}$ may not be the same as $\sigma_{t_2}^{S[t]}$, for $t_1\neq t_2, t_1, t_2<\nu$. 
%
%\redsout{We set the number of groups to be $2K$. When $k\leq K$, the samples from group $k$ follows distribution $p_{0,k}$. When $k>K$, the samples are generated by $p_{1,k-K}$.  Denote by $\sigma_t^{S[t]}(i)$ the label of the group that $X_i[t]$ comes from when one sensor in group $S[t]$ is affected by anomaly. }

Let $\Omega_{\bm{S}} = \{\sigma^{S[1]}_1,...,\sigma^{S[\infty]}_\infty\}$ be the labels when the trajectory of the anomaly is ${\bm{S}}$, which is unknown. Let $\mP^{{\bm{S}},\nu}_{\Omega_{\bm{S}}}$ and $\mathbb{E}^{{\bm{S}},\nu}_{\Omega_{\bm{S}}}$ denote the probability measure and the corresponding expectation when the change point is at $\nu$ and the samples received by the fusion center is permuted according to the  label $\Omega_{\bm{S}}$ (see Appendix \ref{app:a} for more details). We further let $\mP^{\infty}_{\Omega}$ and $\mathbb{E}^{\infty}_{\Omega}$ denote  the probability measure and the corresponding expectation when there is no change, i.e., $\nu=\infty$, where $\Omega=\Omega_{\bm{S}}$ with $S[t]=0, \forall t\geq1$.

We extend Lorden's criterion \cite{lorden1971procedures}, and define the worst-case average detection delay (WADD) and the worst-case average running length (WARL) for any stopping time $\tau$:
\begin{flalign}
	\text{WADD}(\tau) 
	&= \sup_{\nu \geq 1}\sup_{\bm{S}}\sup_{\Omega_{\bm{S}}}\text{esssup}\mE^{{\bm{S}},\nu}_{\Omega_{\bm{S}}}[(\tau-\nu)^+|\mathbf X^n[1,\nu-1]],\nn\\
	 \text{WARL}(\tau) &= \inf_\Omega\mE^\infty_\Omega[\tau],\label{eq:warl}
\end{flalign}
where $\mathbf X^n[t_1,t_2] = \{X^n[t_1],\cdots, X^n[t_2]\}$, for any $t_1\leq t_2$.
The goal is to design a stopping rule that minimizes the WADD subject to a constraint on the WARL:
\begin{flalign}\label{eq:sgoal}
	\inf_{\tau:\text{WARL}(\tau)\geq \gamma} \text{WADD}(\tau),
\end{flalign}
where $\gamma>0$ is a pre-specified threshold. Here the false alarm constraint is to guarantee that under all possible sample permutations, the average running length to a false alarm is always lower bounded by $\gamma$, and $1/\gamma$ can be interpreted as the false alarm rate.

\section{Static Anomaly}\label{sec:static}
We first investigate the case with static anomaly, i.e., the sensor affected by the anomaly does not change with time. In this case, for any $t\geq \nu$, $S[t]=k$ for some unknown type $k$.
Then, for all $j\in\{1,2,\cdots,K,k+K\}$, there are $\left(\substack{n\\n_1,\ldots, n_{k}-1, \ldots, n_K, 1}\right)$ possible $\sigma_t^{k}$ to associate each sample with a data-generating distribution,
% satisfying
% \begin{flalign*}
% |\{i:\sigma_t^{k}(i)=j\}|=   \left\{\begin{array}{ll}
% n_j, &\text{if } 1\leq j\leq K  \text{and}\ j\neq k,\\
% n_j-1, &\text{if } j=k,\\
% 1, &\text{if } j=k+K,\\
% 0, &\text{otherwise}.	
% \end{array}\right.
% \end{flalign*} 
and we denote the collection of all possible labels by $\mathcal S_{n,{k}}$ (see Appendix \ref{app:a} for more details).
Before the anomaly emerges, i.e., $t<\nu$, the samples $X^n[t]$ follows the distribution
\begin{flalign}\label{eq:kp_0}
\mP_{0,\sigma_t^0}(X^n[t])\overset{\Delta}{=}\prod_{i=1}^np_{0,\sigma_t^0(i)}(X_i[t]),
\end{flalign}  for some unknown $\sigma_t^0\in\mathcal S_{n,0}$. At time $t\geq \nu$, $S[t] = k$, $X^n[t]$ follows the distribution 
\begin{flalign}\label{eq:kp_1}
\mP^{k}_{\sigma^{k}_t}(X^n[t])\overset{\Delta}{=}&\prod\limits_{\substack{i: \sigma_t^k(i)\leq K}}p_{0,\sigma^{k}_t(i)}(X_i[t])\times\prod\limits_{\substack{i:\sigma_t^k(i)> K}}p_{1,\sigma^{k}_t(i)-K}(X_i[t]) ,
\end{flalign}
for some unknown $\sigma^{k}_t\in\mathcal S_{n,{k}}$.
Let $\Omega_k = \{\sigma_1^0,\ldots,\sigma_{\nu-1}^0,\sigma_\nu^k,\ldots,\sigma_\infty^k\}$ be the  labels over time, when the anomaly emerges at $\nu$ (similarly defined as $\Omega_{\bm{S}}$).  Let $\mP^{k,\nu}_{\Omega_k}$ denote the probability measure when the change point is at $\nu$ and the samples are generated according to \eqref{eq:kp_0}, \eqref{eq:kp_1} and $\Omega_k$. We further let $\mathbb{E}^{k,\nu}_{\Omega_k}$ denotes the corresponding expectation.

Then, the WADD for any stopping time $\tau$ can be written as
\begin{flalign}
	&\text{WADD}(\tau)=\sup_{\nu \geq 1}\sup_k\sup_{\Omega_k}\text{esssup}\mE^{k,\nu}_{\Omega_k}[(\tau-\nu)^+|\mathbf X^n[1,\nu-1]].\nn
\end{flalign}
The WARL is defined in the same way as in \eqref{eq:warl}.

The goal is to design a stopping rule that minimizes the $\text{WADD}$ subject to a constraint on the WARL:
\begin{flalign}\label{eq:kgoal}
\inf_{\tau:\text{WARL}(\tau)\geq \gamma} \text{WADD}(\tau).
\end{flalign}

\subsection{Universal Lower Bound on WADD}
We first derive a universal lower bound on $\text{WADD}$ for any $\tau$ satisfying the false alarm constraint: $\inf_\Omega\mE^\infty_\Omega[\tau]\geq \gamma$. 

Let $I_k = D(\widetilde{\mP}_k||\widetilde{\mP}_0)$ denote the Kullback-Leibler (KL) divergence between two mixture distributions $\widetilde{\mP}^k=\frac{1}{\mid \msk\mid}\sum_{\sigma\in \msk}\mP^{k}_{\sigma}$ and $\widetilde{\mP}_0=\frac{1}{\mid \msn\mid}\sum_{\sigma\in \msn}\mP_{0,\sigma}$. 
Here, $\widetilde{\mP}^k$ is the uniform mixture of all possible distributions when the affected sensor is from group $k$.
Let $I^* = \min_{1\leq k \leq K}I_k$. We then have the following theorem.
\begin{theorem}\label{theorem:1}
	As $\gamma \rightarrow \infty$, 
	\begin{flalign}
		\inf_{\tau:\text{WARL}(\tau)\geq \gamma} \text{WADD}(\tau) \geq \frac{\log\gamma}{I^*}+O(1).
	\end{flalign}
\end{theorem}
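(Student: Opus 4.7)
The plan is to reduce the minimax lower bound to a classical i.i.d.\ quickest-change detection lower bound between two mixture distributions, and then invoke the sharpened Lorden--Lai bound for that reduced problem.

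First I would pick $k^*\in\arg\min_{1\le k\le K} I_k$, so that $I_{k^*}=I^*$. Since $\text{WADD}(\tau)$ takes a supremum over the affected type $k$, restricting to the single type $k^*$ can only decrease the right-hand side, so it suffices to lower bound the minimax delay for this fixed $k^*$. Next I would randomize the labels: at each $t<\nu$ draw $\sigma_t^0$ uniformly from $\msn$, and at each $t\ge\nu$ draw $\sigma_t^{k^*}$ uniformly from $\mathcal S_{n,k^*}$, all independent across time. Under this randomized labeling the observations $\{X^n[t]\}$ become i.i.d.\ from $\widetilde{\mP}_0$ before $\nu$ and i.i.d.\ from $\widetilde{\mP}^{k^*}$ after $\nu$; call the resulting measures $\widetilde{\mP}^\infty$ and $\widetilde{\mP}^{k^*,\nu}$. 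Because an average dominates the infimum,
\[
\mE_{\widetilde{\mP}^\infty}[\tau]\ \ge\ \inf_{\Omega}\mE^\infty_{\Omega}[\tau]\ \ge\ \gamma,
\]
and because a supremum (and hence an essential supremum) dominates any average, $\text{WADD}(\tau)$ is bounded below by the corresponding essential supremum computed under $\widetilde{\mP}^{k^*,\nu}$.

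Having reduced the problem to i.i.d.\ quickest change detection from $\widetilde{\mP}_0$ to $\widetilde{\mP}^{k^*}$ with KL divergence exactly $I^*$, I would invoke the classical Lorden--Lai lower bound for Lorden's criterion: any stopping time $\tau$ with pre-change mean at least $\gamma$ satisfies
\[
\sup_{\nu\ge 1}\text{esssup}\,\mE_{\widetilde{\mP}^{k^*,\nu}}\!\bigl[(\tau-\nu)^+ \mid X^n[1,\nu-1]\bigr]\ \ge\ \frac{\log\gamma}{I^*}+O(1).
\]
The standard proof proceeds by a change of measure driven by the log-likelihood ratio $Z_{\nu,m}=\sum_{t=\nu}^{\nu+m-1}\log\bigl(\widetilde{\mP}^{k^*}(X^n[t])/\widetilde{\mP}_0(X^n[t])\bigr)$: on the event $\{\tau<\nu+m\}\cap\{Z_{\nu,m}\le b\}$ one transfers a probability bound from post-change to pre-change via the factor $e^{-b}$, and then combines with $\mE_{\widetilde{\mP}^\infty}[\tau]\ge\gamma$ via Markov's inequality. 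Choosing $m\approx \log\gamma/I^*$ and $b\approx\log\gamma$ gives the first-order bound; the sharper additive $O(1)$ correction, rather than a multiplicative $(1-o(1))$, is obtained by controlling the overshoot of $Z_{\nu,m}$ at the threshold using Wald's identity and the renewal-theoretic refinements available in the i.i.d.\ setting.

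The main obstacle I anticipate is the randomization step: one must verify carefully that the product measure induced by independent uniform permutations at each time step interacts correctly with the essential supremum appearing inside $\text{WADD}$, so that both the WARL and WADD inequalities really do reduce to the mixture-measure versions. The key point is that the fusion center's stopping rule is measurable with respect to $\{X^n[t]\}$ only and never observes the unseen labels $\sigma_t^{S[t]}$, so the filtration is compatible with averaging over labelings; once this is established the remaining step is a direct application of the sharp Lorden--Lai lower bound to the two mixtures $\widetilde{\mP}_0$ and $\widetilde{\mP}^{k^*}$.
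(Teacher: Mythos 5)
Your proposal is correct and follows essentially the same route as the paper: reduce to the simple i.i.d.\ QCD problem between the uniform mixtures $\widetilde{\mP}_0$ and $\widetilde{\mP}^{k}$ by showing $\text{WADD}(\tau)\ge\widetilde{\text{WADD}}_k(\tau)$ and $\text{WARL}(\tau)\le\widetilde{\text{ARL}}(\tau)$ (the paper delegates this label-averaging step to eq.~(18) of its reference \cite{sun2020tspanonymous}), then apply the classical lower bound of \cite{lai1998information} and optimize over $k$ to get $I^*$. No substantive differences.
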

The proof of Theorem \ref{theorem:1} can be found in Appendix \ref{sec:kaddlow}.
The main challenges in the proof of Theorem \ref{theorem:1} is due to the worst-case over all labels and affected sensors in WADD and WARL.  From Theorem \ref{theorem:1}, it can be seen that the WADD for problem \eqref{eq:kgoal} is lower bounded by $\frac{\log\gamma}{I^*}+O(1)$ for any stopping rule that satisfy the constraint on WARL. 
Theorem \ref{theorem:1} motivates us to find the $k$ that minimizes $I_k$, i.e., achieves $I^*$, and design an algorithm to achieve this universal lower bound.

% We build novel connections among our problem and the simple QCD problems with pre- and post-change distributions $\widetilde{\mP}_0$, $\widetilde{\mP}^k$ and the unlabeled QCD problems with known affected sensor. By exploring the relationships of their performance bounds, we derive the universal lower bound on WADD for our problem.

\subsection{Generalized Mixture CuSum Algorithm}
In this section, we construct an algorithm that achieves the universal lower bound asymptotically. 

A first idea is to use the MLE to estimate the unknown label $\sigma_t^k$ and the unknown affected sensor $k$. In the static setting, $k$ does not change with time, however, $\sigma_t^k$ changes with time, thus a direct MLE for $\sigma_t^k$ at each time $t$ may not work well. 
% When the affected sensor $k$ is known, the problem \eqref{eq:kgoal} reduces to the QCD problem with the pre-change distribution $\mP_{0,\sigma_t^0}$ and the post-change distribution  $\mP^{k}_{\sigma_t^k}$, for which due to the symmetry a mixture CuSum algorithm is exactly optimal \cite{sun2020tspanonymous}. 
Therefore, we take a mixture approach w.r.t. the unknown label, and then take a MLE approach w.r.t. the unknown affected sensor. Our algorithm is constructed as follows. 

Let $W[t] = \max_{k\in\mathcal{K}}\max_{1\leq j \leq t}\sum^t_{i = j}\log\frac{\widetilde{\mP}^k(X^n[i])}{\widetilde{\mP}_0(X^n[i])}$. 
We then define the GM-CuSum stopping time as follows:
\begin{flalign}\label{eq:tg}
	T_G = \inf\{t:W[t]\geq b\},
\end{flalign} 
where $b>0$ is the threshold. Here $W[t]$ can be updated efficiently. We keep $K$ CuSums in parallel. Note that this can be done recursively. Let $W_k[t] = \max_{1\leq j \leq t}\sum\limits^t_{i = j}\log\frac{\widetilde{\mP}^k(X^n[i])}{\widetilde{\mP}_0(X^n[i])}$. The test statistic $W[t]$ has the following recursion:
\begin{flalign}
&W[t+1] = \max_{k\in \mathcal{K}}\bigg\{(W_k[t])^+ +  \log\frac{\widetilde{\mP}^k(X^n[t+1])}{\widetilde{\mP}_0(X^n[t+1])}\bigg\},
\end{flalign}
where $W_k[0] = 0$ for any $k\in \mathcal{K}$.  We then take their maximum over $k$ as $W[t]$. 

In the following, we show 1) the WARL lower bound of $T_G$ and 2) the WADD upper bound of $T_G$ in the following theorem. 
\begin{theorem}\label{THEOREM:2}
	1) Let $b = \log ({K\gamma})$ in \eqref{eq:tg}. Then $\text{WARL}(T_G)\geq \gamma$; and 
	2) As $\gamma \rightarrow \infty$, 
	$
		\text{WADD}(T_G) \leq \frac{\log \gamma}{I^*} + O(1).
	$
\end{theorem}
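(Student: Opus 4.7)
Both halves of the theorem rest on one structural observation: the mixture densities $\widetilde{\mP}^k$ and $\widetilde{\mP}_0$ are \emph{invariant} under permutations of the coordinates of their argument vector. Indeed, any permutation $\pi$ of $\{1,\ldots,n\}$ acts inside the sums $\frac{1}{|\msk|}\sum_{\sigma\in\msk}\mP^{k}_\sigma$ and $\frac{1}{|\msn|}\sum_{\sigma\in\msn}\mP_{0,\sigma}$ as the reindexing $\sigma\mapsto\sigma\circ\pi^{-1}$, which is a bijection on $\msk$ (resp.\ $\msn$) since each class is determined only by the multiset of label values. Hence $L_t^{(k)}\triangleq\log\frac{\widetilde{\mP}^k(X^n[t])}{\widetilde{\mP}_0(X^n[t])}$ is a symmetric function of $(X_1[t],\ldots,X_n[t])$. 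Combined with the fact that the unordered multiset $\{X_i[t]\}_{i=1}^n$ has the same law under every $\mP_{0,\sigma}$ (and every $\mP^{k}_\sigma$) within its class, this yields the key reduction: the joint distribution of $(L^{(k)}_t)_{t\ge1,\,k\in\mathcal K}$ is the same under every admissible $\mP^\infty_\Omega$ as under the product mixture $\widetilde{\mP}_0^{\infty}$.

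\textbf{Part 1 (WARL lower bound).} The reduction immediately gives $\text{WARL}(T_G)=\widetilde{\mE}_0^{\infty}[T_G]$. Under $\widetilde{\mP}_0^{\infty}$ the samples $X^n[t]$ are i.i.d.\ from $\widetilde{\mP}_0$ and $\widetilde{\mE}_0[e^{L_t^{(k)}}]=1$, so for each $j$ the process $\prod_{i=j+1}^t e^{L_i^{(k)}}$ is a non-negative martingale starting from $1$ at time $j$. Ville's inequality gives $\widetilde{\mP}_0(\sup_{t\ge j+1}\prod_{i=j+1}^t e^{L_i^{(k)}}\ge e^b)\le e^{-b}$. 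A union bound over the ladder starts $j\in\{0,\ldots,m-1\}$ and then over $k\in\mathcal{K}$ yields $\widetilde{\mP}_0(T_G\le m)\le Kme^{-b}$; integrating the tail (combined, if a tighter constant is needed, with the sharper renewal-theoretic CuSum identity $\widetilde{\mE}_0[T_k]\ge e^b$ per stream) produces $\widetilde{\mE}_0^{\infty}[T_G]\gtrsim e^b/K$, and choosing $b=\log(K\gamma)$ closes this half.

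\textbf{Part 2 (WADD upper bound).} For every $k^*\in\mathcal{K}$ we have $T_G=\min_k T_k\le T_{k^*}$ pointwise, hence $(T_G-\nu)^+\le(T_{k^*}-\nu)^+$, and it suffices to upper bound the WADD of the single-stream CuSum $T_{k^*}$ when $k^*$ is the true anomaly type. The same symmetry, applied now to $\msk$, shows that for every $\sigma\in\msk$ the drift of $L_t^{(k^*)}$ under $\mP^{k^*}_\sigma$ equals its drift under the mixture $\widetilde{\mP}^{k^*}$, namely $I_{k^*}=D(\widetilde{\mP}^{k^*}\|\widetilde{\mP}_0)$. I would then invoke the classical Lorden--Moustakides CuSum argument: with positive drift $I_{k^*}$ and finite second moment of $L_t^{(k^*)}$, renewal theory yields
$$\mE^{k^*,\nu}_{\Omega_{k^*}}\!\left[(T_{k^*}-\nu)^+\,\big|\,\mathbf X^n[1,\nu-1]\right]\le \frac{b}{I_{k^*}}+O(1),$$
uniformly in $\nu$, $\Omega_{k^*}$ and the conditioning past. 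Taking $\sup_{k^*}$ replaces $I_{k^*}$ by $I^*$, and $b=\log(K\gamma)=\log\gamma+O(1)$ gives the claimed bound.

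\textbf{Main obstacle.} The delicate step is the symmetry/reduction: first verifying that $\widetilde{\mP}^k$ and $\widetilde{\mP}_0$ are coordinate-permutation-invariant, and then turning this into the statement that every admissible $\Omega$ induces the same law on $(L_t^{(k)})_{t,k}$ as the mixture $\widetilde{\mP}_0^{\infty}$. Once this is in place, both bounds reduce to standard CuSum machinery (Ville's inequality for the WARL, drift--overshoot analysis for the WADD); the remaining bookkeeping is only to absorb the $\log K$ term, the overshoot correction, and the sup over the conditioning past into the $O(1)$.
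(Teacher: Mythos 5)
Your proposal follows essentially the same route as the paper: (i) permutation-invariance of the mixture likelihood ratio $\log\frac{\widetilde{\mP}^k(X^n)}{\widetilde{\mP}_0(X^n)}$ reduces the worst case over labels $\Omega$ to the i.i.d.\ mixture law $\widetilde{\mP}_0$; (ii) the mean-one martingale property plus a maximal inequality and a union bound over $k$ give the WARL bound; (iii) $T_G\le T_k$ pointwise reduces the WADD to that of the single-stream mixture CuSum, whose delay is $b/I_k+O(1)$ (the paper cites this as Theorem 1 of its earlier work rather than re-deriving it via renewal theory, but the content is the same). One execution detail in Part 1 deserves care: your primary route — union bound over all ladder start points $j\in\{0,\dots,m-1\}$, giving $\widetilde{\mP}_0(T_G\le m)\le Kme^{-b}$, then integrating the tail — yields only $\widetilde{\mE}_0[T_G]\gtrsim e^b/(2K)$, which does not deliver the exact claim that $b=\log(K\gamma)$ suffices. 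The paper instead uses Lorden's regeneration argument on the \emph{combined} statistic: it defines the ladder epochs $r_m$ at which $\sup_k\sum_{i=r_m+1}^{t}\log\frac{\widetilde{\mP}^k(X^n_i)}{\widetilde{\mP}_0(X^n_i)}$ returns below $0$, shows each excursion crosses $b$ with probability at most $Ke^{-b}$ (Doob/optional sampling plus the union bound over $k$), and sums the resulting geometric series to get exactly $e^b/K$. Your fallback of invoking the per-stream identity $\widetilde{\mE}_0[T_k]\ge e^b$ does not directly bound $\mE[\min_k T_k]$; the regeneration argument on the max statistic is the tool actually needed. With that substitution Part 1 is complete, and Part 2 is fine as you describe (the $\log K$ from $b=\log(K\gamma)$ and the overshoot are absorbed into the $O(1)$).
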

	%Firstly, we have that if $T_G$ is applied with a threshold $b$, then the WARL is lower bounded as follows:  
	%\begin{flalign}\label{EQ:STATICARL}
	%\text{WARL}(T_G) \geq \frac{e^b}{K}.
	%\end{flalign}
	
The proof of Theorem \ref{THEOREM:2} can be found in Appendix \ref{sec:karlproof}.

The proof of the lower bound on WARL is based on Doob's submartingale inequality \cite{williams_1991} and the optional sampling theorem \cite{williams_1991}. The major challenge lies in that we consider the worst-case label. A key property we develop and use in the proof of the WARL lower bound is that under the pre-change distribution $\mP_{0,\sigma_t^0}$, for any $k\in\mathcal{K}$, the expectation of the mixture likelihood ratio $\mE_{{0,\sigma^0}}\Big[\log\frac{\widetilde{\mP}^k(X^n)}{\widetilde{\mP}_0(X^n)}\Big]$ is invariant for different $\sigma^0$'s. 

Theorem $\ref{THEOREM:2}$ suggests that to meet the WARL constraint, $b$ should be chosen such that $b = \log {K\gamma}$.

Based on Theorem \ref{theorem:1} and Theorem \ref{THEOREM:2}, we then establish the second-order asymptotic optimality of $T_G$ in the following theorem.
\begin{theorem}\label{theorem:Gmopt}
$T_G$ is second-order asymptotically optimal for the problem in \eqref{eq:kgoal}.
\end{theorem}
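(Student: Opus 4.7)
The plan is to derive Theorem \ref{theorem:Gmopt} as a direct corollary of Theorems \ref{theorem:1} and \ref{THEOREM:2}. Second-order asymptotic optimality of $T_G$ in this setting means that
$$\text{WADD}(T_G) - \inf_{\tau:\text{WARL}(\tau)\geq \gamma}\text{WADD}(\tau) = O(1) \quad \text{as } \gamma \to \infty,$$
i.e., the worst-case detection delay of $T_G$ matches the infimum over all feasible stopping rules to within an additive constant. This is the strongest form of asymptotic optimality one could hope for here, since the upper and lower bounds on WADD already coincide in the leading coefficient $1/I^*$ of $\log\gamma$.

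First, I would fix the threshold in \eqref{eq:tg} at $b = \log(K\gamma)$. By part 1 of Theorem \ref{THEOREM:2}, this choice guarantees $\text{WARL}(T_G)\geq \gamma$, so $T_G$ is feasible for the minimization in \eqref{eq:kgoal}. Under the very same threshold, part 2 of Theorem \ref{THEOREM:2} supplies the upper bound $\text{WADD}(T_G) \leq \frac{\log\gamma}{I^*} + O(1)$ as $\gamma \to \infty$.

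Next, I would invoke Theorem \ref{theorem:1} to obtain the universal lower bound
$$\inf_{\tau:\text{WARL}(\tau)\geq\gamma}\text{WADD}(\tau) \geq \frac{\log\gamma}{I^*} + O(1),$$
which applies to every stopping rule satisfying the WARL constraint. Subtracting this lower bound from the upper bound on $\text{WADD}(T_G)$, and noting that the two $O(1)$ remainders are independent of $\tau$ and combine into another $O(1)$ term, yields the claimed second-order gap. The main obstacle is not at this step — the heavy lifting has already been carried out in Theorems \ref{theorem:1} and \ref{THEOREM:2}, notably the change-of-measure argument behind the universal lower bound and the invariance of the mixture KL divergence under pre-change permutations used to control WARL. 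The only subtle point, already absorbed inside Theorem \ref{THEOREM:2}, is that the threshold must scale as $\log(K\gamma)$ rather than $\log\gamma$ to compensate for the maximum over the $K$ parallel CuSum branches; this $\log K$ term is hidden inside the $O(1)$ remainder and does not disturb the matching leading coefficient $1/I^*$.
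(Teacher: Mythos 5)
Your proposal is correct and follows exactly the paper's route: the paper also proves this theorem by simply combining the lower bound of Theorem \ref{theorem:1} with the matching upper bound and WARL guarantee of Theorem \ref{THEOREM:2} under the threshold $b=\log(K\gamma)$. Your write-up just makes explicit the definition of second-order optimality and the role of the $\log K$ term inside the $O(1)$ remainder, which the paper leaves implicit.
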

\begin{proof}
By Theorem $\ref{theorem:1}$ and Theorem $\ref{THEOREM:2}$, we establish the second-order asymptotic optimality of $T_G$.
\end{proof}

\section{Quickest Dynamic Anomaly Detection}\label{sec:dynamic}
In this section, we consider the general problem with a dynamic anomaly, where the sensor affected by the anomaly changes with time. The GM-CuSum algorithm designed for static anomaly may not work well anymore since the sensor affected by the anomaly changes with time.
\subsection{Universal Lower Bound on WADD}

Define the following weighted mixture distribution:
$
	\widetilde{\mP}^{\bm{\beta}}(X^n) = \sum_{k=1}^{K}\beta_k\widetilde{\mP}^k(X^n),
$
where $\bm{\beta} = \{\beta_k\}_{k=1}^K$, $0\leq \beta_k \leq 1$ and $\sum_{k=1}^K\beta_k = 1$.
Denote by $I_{\bm{\beta}}$ the KL divergence between $\widetilde{\mP}^{\bm{\beta}}$ and $\widetilde{\mP}_0$. 
% The static setting can be seen as a special case where $\beta_k = 0\ \text{or}\ 1$ and $\sum_{k=1}^K\beta_k = 1$. 
Let $\bm{\beta^*} = \mathop{\arg\min}_{\bm{\beta}}I_{\bm\beta}$.

For the universal lower bound on WADD, we have the following theorem.
\begin{theorem}\label{theorem:3}
	As $\gamma \rightarrow \infty$, we have that 
	\begin{flalign}\label{eq:lowadds}
		\inf_{\tau:\text{WARL}(\tau)\geq \gamma}\text{WADD}(\tau)\geq \frac{\log\gamma}{I_{\bm{\beta^*}}}(1+o(1)).
	\end{flalign}
\end{theorem}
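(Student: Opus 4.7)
The plan is to obtain the lower bound via a Bayesian averaging argument. Because $\text{WADD}(\tau)$ is defined as a supremum over trajectories $\bm S$ and label sequences $\Omega_{\bm S}$, it dominates the mean detection delay under any prior on those objects. By choosing the prior so that the effective post-change law is the weighted mixture $\widetilde{\mP}^{\bm\beta^*}$, the problem reduces to a classical i.i.d.\ change-point detection problem whose asymptotic lower bound is governed by the single scalar $I_{\bm\beta^*}$.

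First I construct the randomized adversary. For $t<\nu$, draw the pre-change label $\sigma_t^0$ uniformly from $\msn$; for $t\geq\nu$, independently in $t$ draw $S[t]=k$ with probability $\beta_k^*$ and, conditional on $S[t]$, draw $\sigma_t^{S[t]}$ uniformly from $\mathcal S_{n,S[t]}$. Let $\bar{\mP}^\nu$ denote the resulting joint law of $\{X^n[t]\}_{t\geq 1}$. By construction, under $\bar{\mP}^\nu$ the vectors $X^n[t]$ are i.i.d.\ with marginal $\widetilde{\mP}_0$ for $t<\nu$ and i.i.d.\ with marginal $\widetilde{\mP}^{\bm\beta^*}$ for $t\geq\nu$.

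Next I transfer the WARL and WADD constraints to $\bar{\mP}$. Since $\inf_{\Omega}\mE^\infty_\Omega[\tau]\geq\gamma$ holds for every pre-change label sequence, averaging over the uniform prior on $\{\sigma_t^0\}$ gives $\bar{\mE}^\infty[\tau]\geq\gamma$. For the delay, combining $\text{esssup}\,\mE[\,\cdot\mid\mathbf X^n[1,\nu-1]]\geq \mE[\cdot]$ with $\sup\geq\text{average}$ over the prior yields
\[
\text{WADD}(\tau)\;\geq\;\sup_{\nu\geq 1}\bar{\mE}^\nu\big[(\tau-\nu)^+\big].
\]
Under $\bar{\mP}^\nu$ the blocks $X^n[t]$ are i.i.d.\ with per-step KL information $D(\widetilde{\mP}^{\bm\beta^*}\|\widetilde{\mP}_0)=I_{\bm\beta^*}$, so the standard asymptotic QCD lower bound \cite{lai1998information} gives $\sup_{\nu\geq 1}\bar{\mE}^\nu[(\tau-\nu)^+]\geq (\log\gamma/I_{\bm\beta^*})(1+o(1))$ as $\gamma\to\infty$, subject only to a standard integrability condition on $\log(\widetilde{\mP}^{\bm\beta^*}/\widetilde{\mP}_0)$ that follows from the i.i.d.\ structure and finiteness of $I_{\bm\beta^*}$. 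Chaining with the previous display completes the proof.

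I expect the main obstacle to be bookkeeping rather than conceptual: verifying that the WARL constraint and the essential supremum inside WADD correctly reduce to their Bayesian-averaged counterparts on the single mixture measure $\bar{\mP}$, and checking the regularity conditions for the specific pair $(\widetilde{\mP}_0,\widetilde{\mP}^{\bm\beta^*})$. Note that the optimality of $\bm\beta^*$ enters only through the definition of $I_{\bm\beta^*}$; since $I_{\bm\beta^*}\leq\min_k I_k = I^*$ by convexity of KL divergence, this bound is strictly tighter than the static-case bound of Theorem~\ref{theorem:1}, reflecting the extra difficulty of the dynamic setting.
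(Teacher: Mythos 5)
Your proposal is correct and rests on the same core idea as the paper's proof: ``supremum dominates average,'' so the worst case over labels and trajectories is lower bounded by a Bayesian mixture under which the pre-change law is $\widetilde{\mP}_0$ and the post-change law is $\widetilde{\mP}^{\bm{\beta^*}}$, reducing everything to a classical i.i.d.\ QCD problem with information number $I_{\bm{\beta^*}}$; your WARL transfer is identical to the paper's proof that $\mathcal{C}_\gamma\subseteq\widetilde{\mathcal{C}}_\gamma$. Where you genuinely diverge is in handling the essential supremum. The paper preserves Lorden's conditional structure throughout: its Steps 1 and 2 show that the esssup over worst-case pre-change labels coincides with the esssup under the pre-change mixture measure and dominates the averaged post-change labels, so it can invoke Lai's bound for the Lorden criterion of the reduced problem. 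You instead discard the conditioning at the outset via $\text{esssup}\,\mE[\cdot\mid\mathbf X^n[1,\nu-1]]\geq\mE[\cdot]$, landing on the unconditional quantity $\sup_{\nu}\bar{\mE}^{\nu}[(\tau-\nu)^+]$, which is weaker than both Lorden's and Pollak's measures. This still closes, but you should justify it explicitly: the classical lower bound is usually stated for the conditional delays, so use the version valid for the unconditional one — e.g., take $\nu=1$, where conditional and unconditional delays coincide, and apply the uniform probability bound $\bar{\mP}^{1}\big(\tau\leq(1-\delta)\log\gamma/I_{\bm{\beta^*}}\big)\to0$ from \cite{lai1998information} to conclude $\bar{\mE}^{1}[\tau]\geq(\log\gamma/I_{\bm{\beta^*}})(1+o(1))$. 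With that clarification (plus the routine finite-horizon truncation needed to justify averaging over infinitely many labels and trajectories, which the paper spells out via its $T\to\infty$ limits), your argument is complete and somewhat more elementary than the paper's, at the cost of controlling only the stated WADD rather than the stronger Lorden-type intermediate quantities; also, ``strictly tighter than Theorem~\ref{theorem:1}'' should read ``no larger,'' since $I_{\bm{\beta^*}}=I^*$ is possible.
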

The proof of Theorem \ref{theorem:3} can be found in Appendix \ref{sec:lowwadds}.

It can be seen from Theorem \ref{theorem:3} that the WADD for the problem in \eqref{eq:sgoal} is lower bounded by $\frac{\log\gamma}{I_{\bm{\beta^*}}}(1+o(1))$ for large $\gamma$. This motivates us to apply the optimal weight $\bm{\beta^*}$ to design an algorithm that can achieves the WADD lower bound asymptotically.
Moreover, we have that $I^*\geq I_{\bm{\beta^*}}$ which implies that a dynamic anomaly is more difficult to detect than a static anomaly.

\subsection{Weighted Mixture CuSum}
In the static setting, the unknown affected sensor can be estimated by its MLE. However, in the dynamic setting, the affected sensor changes with time, and the MLE approach may not work well. Theorem \eqref{theorem:3} motivates us to tackle the unknown anomaly trajectory using a Bayesian approach where the probability that the $k$-th group is affected by the anomaly is $\beta_k^*$. We then construct our weighted mixture CuSum algorithm as follows. 
Define the log of weighted mixture likelihood ratio using $\bm{\beta^*}$:
\begin{flalign}
	\ell_{\bm{\beta^*}}(X^n)=
	%\log\frac{\sum_{k=1}^{K}\beta_k^*\frac{1}{|\msk|}\sum_{\sigma^k \in \msk}\mP_{k,\sigma^k}(X^n)}{\frac{1}{\mid \msn\mid}\sum_{\sigma^0\in\msn}\mP_{0,\sigma^0}(X^n)}
	\log \frac{\widetilde{\mP}^{\bm{\beta^*}}(X^n)}{\widetilde{\mP}_0(X^n)}.
\end{flalign}
It can be easily shown that $\ell_{\bm{\beta^*}}(X^n)$ is invariant to any permutations on $X^n$, i.e.,  for any permutation $\pi(X^n)=(X_{\pi(1)},X_{\pi(2)},\ldots,X_{\pi(n)})$, $\ell_{\bm{\beta^*}}(X^n)=\ell_{\bm{\beta^*}}(\pi(X^n))$. This is due to the fact that  $\ell_{\bm{\beta^*}}(X^n)$ takes the sum over all possible group assignments thus is invariant to the actual permutation of samples.

We then construct the following weighted mixture CuSum algorithm:
\begin{flalign}\label{eq:wmixture}
	T_{\bm{\beta^*}}(b) = \inf\Big\{t: \max\limits_{1\leq j\leq t+1}\sum_{i=j}^t \ell_{\bm\beta^*}(X^n[i])\geq b\Big\}.
\end{flalign}
Let $\widehat{W}[t] = \max\limits_{1\leq j\leq t+1}\sum_{i=j}^t \ell_{\bm\beta^*}(X^n[i])$. The test statistic $\widehat{W}[t]$ has the following recursion: 
$
\widehat{W}[t+1] = (\widehat W[t])^+ + \ell_{\bm\beta^*}(X^n[t+1]), \widehat{W}[0] = 0.
$

Note that different from the way that we handle the unknown and time-varying label $\sigma$, here, for the unknown type of the affected sensor, we take the mixture according to $\beta^*$ instead of  a uniform distribution over $\mathcal K$. As will be shown later both theoretically in Theorem  \ref{theorem:WMopt} and numerically in Section \ref{sec:numerical}, taking a uniform mixture over $\mathcal K$ may not lead to the optimal performance. 

Let $\widetilde{\mE}^{k}$ and $\widetilde{\mE}_{0}$ denote the expectation under the probability $\widetilde{\mP}^{k}$ and $\widetilde{\mP}_{0}$ respectively.
The following property of $\bm{\beta^*}$ plays an important role in developing the asymptotic optimality of the weighted mixture CuSum algorithm.  
\begin{lemma}\label{lemma:1}
	For any $k\in\mathcal{K}$,
	\begin{flalign}
		\widetilde{\mE}^{k}\Big[\log\frac{\widetilde{\mP}^{\bm{\beta^*}}(X^n)}{\widetilde{\mP}_0(X^n)}\Big] \geq I_{\bm{\beta^*}}.
	\end{flalign}
\end{lemma}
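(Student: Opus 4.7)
The plan is to obtain the inequality as a first-order optimality condition for $\bm{\beta^*}$ viewed as the minimizer of the convex function $\bm{\beta}\mapsto I_{\bm{\beta}}=D(\widetilde{\mP}^{\bm{\beta}}\|\widetilde{\mP}_0)$ on the probability simplex. Convexity in $\bm{\beta}$ is inherited from the convexity of KL divergence in its first argument, so a directional-derivative condition at $\bm{\beta^*}$ is both necessary and sufficient for optimality, and this reasoning will go through uniformly whether $\bm{\beta^*}$ is interior or lies on a face of the simplex.

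First I would introduce the shorthand $f_k(\bm{\beta})\triangleq\widetilde{\mE}^{k}\big[\log(\widetilde{\mP}^{\bm{\beta}}/\widetilde{\mP}_0)\big]$, so that the goal becomes $f_k(\bm{\beta^*})\geq I_{\bm{\beta^*}}$. Using $\widetilde{\mP}^{\bm{\beta}}=\sum_j\beta_j\widetilde{\mP}^j$, two identities fall out by direct computation: the linear decomposition $I_{\bm{\beta}}=\sum_j\beta_jf_j(\bm{\beta})$, and the partial-derivative formula $\partial I_{\bm{\beta}}/\partial\beta_k=f_k(\bm{\beta})+1$. The $+1$ comes from differentiating the $\log\widetilde{\mP}^{\bm{\beta}}$ factor inside the integrand, which contributes $\sum_x\widetilde{\mP}^k(x)=1$; the other term gives $f_k(\bm{\beta})$.

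Next, for a fixed $k\in\mathcal{K}$ I would consider the feasible perturbation $\bm{\beta}(\epsilon)=(1-\epsilon)\bm{\beta^*}+\epsilon\,\mathbf{e}_k$ for $\epsilon\in[0,1]$, where $\mathbf{e}_k$ is the $k$-th standard basis vector; this stays in the simplex. Since $\bm{\beta^*}$ minimizes $I_{\bm{\beta}}$, the right-hand derivative of $\epsilon\mapsto I_{\bm{\beta}(\epsilon)}$ at $\epsilon=0$ is nonnegative. Computing this derivative via the chain rule using $\dot\beta_j(0)=\delta_{jk}-\beta_j^*$ and the gradient formula above yields $\sum_j(\delta_{jk}-\beta_j^*)(f_j(\bm{\beta^*})+1)=(f_k(\bm{\beta^*})+1)-(I_{\bm{\beta^*}}+1)=f_k(\bm{\beta^*})-I_{\bm{\beta^*}}$, and this being $\geq 0$ is exactly the claim.

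The main subtlety I anticipate is handling the boundary case where some $\beta_k^*=0$: the standard coordinatewise condition $\nabla_kI_{\bm{\beta^*}}=\lambda$ fails, and one cannot simply set the partial derivatives equal at every $k$. The perturbation $\bm{\beta}(\epsilon)$ sidesteps this because it is manifestly feasible and only a \emph{one-sided} derivative at $\epsilon=0^+$ is needed; convexity of $I_{\bm{\beta}}$ ensures the associated difference quotient is monotone in $\epsilon$, so this right derivative exists and the optimality inequality is legitimate. The remaining technicalities — differentiating under the sum and verifying that $\widetilde{\mP}^{\bm{\beta}}$ is smooth in $\bm{\beta}$ — are routine provided $I_{\bm{\beta^*}}<\infty$; if $I_{\bm{\beta^*}}=\infty$ the claim is vacuous, so nothing is lost.
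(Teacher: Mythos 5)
Your proof is correct and takes essentially the same approach as the paper: both arguments rest on the first-order optimality conditions of the convex program $\min_{\bm{\beta}}I_{\bm{\beta}}$ over the simplex, together with the gradient formula $\partial I_{\bm{\beta}}/\partial\beta_k=\widetilde{\mE}^{k}\big[\log(\widetilde{\mP}^{\bm{\beta}}/\widetilde{\mP}_0)\big]+1$ and the decomposition $I_{\bm{\beta^*}}=\sum_k\beta_k^*\,\widetilde{\mE}^{k}\big[\log(\widetilde{\mP}^{\bm{\beta^*}}/\widetilde{\mP}_0)\big]$. The only difference is presentational: the paper writes the optimality condition via KKT multipliers with complementary slackness (treating the coordinates with $\beta_k^*>0$ and $\beta_k^*=0$ separately), whereas you use the equivalent one-sided directional-derivative form along $(1-\epsilon)\bm{\beta^*}+\epsilon\,\mathbf{e}_k$, which handles interior and boundary coordinates uniformly.
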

The proof of Lemma \ref{lemma:1} can be found in Appendix \ref{sec:lemma}.

In the following, we provide a heuristic explanation of how $\widehat{W}[t]$ evolves in the pre- and post-change regimes. 
We first argue that  $\mE^{k}_{\sigma^k}[\ell_{\bm{\beta^*}}(X^n)]$ is invariant for different $\sigma^k$'s. Specifically, let $\mE^{k}_{\sigma^k}$ denote the expectation under $\mP^{k}_{\sigma^k}$, where a sensor of type $k$ is affected, and the data received is labeled according to $\sigma^k$. For any $\pi$, let $\hat{\sigma}^k=\sigma^k\circ\pi$. Then $\mE^{k}_{\sigma^k}[\ell_{\bm{\beta^*}}(\pi(X^n))]=\mE^{k}_{\sigma^k\circ\pi}[\ell_{\bm{\beta^*}}(X^n)]=\mE^{k}_{\hat{\sigma}^k}[\ell_{\bm{\beta^*}}(X^n)]$. For any $\hat{\sigma}^k\in\msk$, a $\pi$ can always be found so that $\sigma^k\circ\pi=\hat\sigma^k$. Thus, for any $\sigma^k,\hat\sigma^k\in\msk$,
$\mE^{k}_{\hat\sigma^k}[\ell_{\bm{\beta^*}}(X^n)]=\mE^{k}_{\sigma^k}[\ell_{\bm{\beta^*}}(X^n)]$.
Therefore, $\mE^{k}_{\sigma^k}[\ell_{\bm{\beta^*}}(X^n)]$ is invariant for different $\sigma^k$'s. 
Then, under the pre-change distribution $\mP_{0,\sigma_t^0}$, the expectation of the weighted mixture likelihood ratio $\mE_{{0,\sigma_t^0}}[\ell_{\bm{\beta^*}}(X^n)]$ is invariant for different $\sigma^0_t$'s, we have that 
\begin{flalign}
&\mE_{{0,\sigma_t^0}}\Big[\log\frac{\widetilde{\mP}^{\bm{\beta^*}}(X^n)}{\widetilde{\mP}_0(X^n)}\Big]\nn\\& = \frac{1}{\mid \msn\mid}\sum_{\sigma^0_t\in \msn}\mE_{ {0,\sigma_t^0}}\Big[\log\frac{\widetilde{\mP}^{\bm{\beta^*}}(X^n)}{\widetilde{\mP}_0(X^n)}\Big]\nn\\& = \widetilde{\mE}_{0}\Big[\log\frac{\widetilde{\mP}^{\bm{\beta^*}}(X^n)}{\widetilde{\mP}_0(X^n)}\Big]\nn\\& = -D(\widetilde \mP_0||\widetilde{\mP}^{\bm{\beta^*}}) \leq0. 
\end{flalign}
Therefore, before the change time $\nu$, $\widehat{W}[t]$ has a negative drift. Similarly, from Lemma \ref{lemma:1}, after the change time $\nu$, under any group assignment $\Omega_{\bm{S}}$ and trajectory $\bm{S}$, $\widehat{W}[t]$ has a positive drift whose expectation is no less than $I_{\bm{\beta^*}}$, and evolves towards $\infty$.

% Let $\mE^{k}_{\sigma^k}$ denote the expectation under $\mP^{k}_{\sigma^k}$, where a sensor of type $k$ is affected, and the data received is labeled according to $\sigma^k$. For any $\pi$, let $\hat{\sigma}^k=\sigma^k\circ\pi$. Then $\mE^{k}_{\sigma^k}[\ell_{\bm{\beta^*}}(\pi(X^n))]=\mE^{k}_{\sigma^k\circ\pi}[\ell_{\bm{\beta^*}}(X^n)]=\mE^{k}_{\hat{\sigma}^k}[\ell_{\bm{\beta^*}}(X^n)]$. For any $\hat{\sigma}^k\in\msk$, a $\pi$ can always be found so that $\sigma^k\circ\pi=\hat\sigma^k$. Thus, for any $\sigma^k,\hat\sigma^k\in\msk$,
% $\mE^{k}_{\hat\sigma^k}[\ell_{\bm{\beta^*}}(X^n)]=\mE^{k}_{\sigma^k}[\ell_{\bm{\beta^*}}(X^n)]$.
% Therefore, $\mE^{k}_{\sigma^k}[\ell_{\bm{\beta^*}}(X^n)]$ is invariant for different $\sigma^k$'s. 

The following theorem establishes 1) the WARL lower bound of $T_{\bm{\beta^*}}$, and 2) the WADD upper bound of $T_{\bm{\beta^*}}$.
\begin{theorem}\label{theorem:4}
	1) For $T_{\bm{\beta^*}}$ defined in \eqref{eq:wmixture}, let $b = \log\gamma$, then $\text{WARL}(T_{\bm{\beta^*}})\geq \gamma$.
	
	2) As $\gamma\rightarrow \infty$, we have that 
	\begin{flalign}\label{eq:upwadds}
		\text{WADD}(T_{\bm{\beta^*}})  \leq \frac{\log\gamma}{I_{\bm{\beta^*}}}(1+o(1)).
	\end{flalign}
\end{theorem}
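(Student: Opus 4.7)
The plan is to establish the two assertions of Theorem~\ref{theorem:4} in sequence, with both parts resting on the permutation invariance of $\ell_{\bm{\beta^*}}$ together with the transitive action of the symmetric group on the label sets $\msn$ and $\msk$ (the same mechanism already exploited in the paragraph preceding Lemma~\ref{lemma:1}).

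For part 1, the first step is to verify that for any pre-change label $\sigma^0\in\msn$, $\mE_{0,\sigma^0}[\exp(\ell_{\bm{\beta^*}}(X^n))]=1$. Because $\ell_{\bm{\beta^*}}$ is permutation invariant and the symmetric group acts transitively on $\msn$, this expectation does not depend on $\sigma^0$; its common value then equals its average over $\msn$, which is $\widetilde{\mE}_0[\widetilde{\mP}^{\bm{\beta^*}}(X^n)/\widetilde{\mP}_0(X^n)]=1$. It follows that for any pre-change label sequence $\Omega=\{\sigma_t^0\}_{t\geq 1}$, the product $M_t=\prod_{i=1}^t\exp(\ell_{\bm{\beta^*}}(X^n[i]))$ is a mean-one martingale under $\mP^\infty_\Omega$. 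A standard Lorden-type CuSum argument---decomposing the sample path of $\widehat W[t]$ into excursions that return to zero and applying optional sampling to $M_t$ on each excursion---then yields $\mE^\infty_\Omega[T_{\bm{\beta^*}}]\geq e^b$ uniformly in $\Omega$. Setting $b=\log\gamma$ delivers $\text{WARL}(T_{\bm{\beta^*}})\geq\gamma$.

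For part 2, I would first combine Lemma~\ref{lemma:1} with the same permutation-invariance argument (now applied to $\msk$ in place of $\msn$) to show that for every $k\in\mathcal{K}$ and every $\sigma^k\in\msk$,
\begin{flalign*}
\mE^k_{\sigma^k}[\ell_{\bm{\beta^*}}(X^n)]=\widetilde{\mE}^k[\ell_{\bm{\beta^*}}(X^n)]\geq I_{\bm{\beta^*}}.
\end{flalign*}
Hence, for any trajectory $\bm{S}$ and any labels $\Omega_{\bm{S}}$, each post-change increment of $\widehat W[t]$ has conditional mean at least $I_{\bm{\beta^*}}$. Since $\widehat W[t]\geq 0$ and its recursion $\widehat W[t+1]=(\widehat W[t])^++\ell_{\bm{\beta^*}}(X^n[t+1])$ is non-decreasing in its initial value, Lorden's reduction applies and bounds $\text{esssup}\,\mE^{\bm{S},\nu}_{\Omega_{\bm{S}}}[(T_{\bm{\beta^*}}-\nu)^+\mid\mathbf{X}^n[1,\nu-1]]$ by the expectation of the first-passage time of a fresh CuSum that starts at zero at time $\nu$. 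Because that fresh CuSum has drift at least $I_{\bm{\beta^*}}$ at every step after $\nu$, a Wald-identity argument (with the standard overshoot-negligibility control) gives the first-passage expectation no larger than $\log\gamma/I_{\bm{\beta^*}}(1+o(1))$ uniformly in $\bm{S},\Omega_{\bm{S}}$ when $b=\log\gamma$.

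The main obstacle is that both bounds must hold uniformly over the unknown and time-varying trajectory $\bm{S}$ and labels $\Omega_{\bm{S}}$, whereas textbook CuSum analyses presuppose fixed pre- and post-change distributions. The permutation invariance of $\ell_{\bm{\beta^*}}$ collapses every per-sample expectation into a label-free mixture quantity, so the suprema over $\Omega$ in the WARL and WADD effectively disappear; Lemma~\ref{lemma:1} then ensures the uniform positive drift $I_{\bm{\beta^*}}$ irrespective of which group is affected at each time. Carefully verifying that Lorden's reduction and the overshoot control underlying the Wald-identity step remain valid despite the non-identically-distributed (across time) post-change increments induced by a time-varying $\bm{S}$ is the most delicate piece of the argument.
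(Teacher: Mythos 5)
Your Part 1 is correct and is essentially the argument the paper uses (for Theorem~\ref{THEOREM:2} it is written out explicitly as the excursion/optional-sampling argument; for Theorem~\ref{theorem:4} the paper instead first uses permutation invariance to show $\text{WARL}(T_{\bm\beta^*})=\widetilde{\text{ARL}}(T_{\bm\beta^*})$ and then cites the classical CuSum ARL bound for the simple pair $(\widetilde{\mP}_0,\widetilde{\mP}^{\bm{\beta^*}})$ --- the two routes are equivalent and both rest on the mean-one property of $\exp(\ell_{\bm{\beta^*}})$ under every $\mP_{0,\sigma^0}$, which you verify correctly). Your setup for Part 2 also matches the paper: permutation invariance collapses the supremum over labels to the mixture measure, non-negativity of $\widehat W[t]$ puts the worst case at $\nu=0$, and Lemma~\ref{lemma:1} gives the uniform drift $\widetilde{\mE}^{S[j]}[\ell_{\bm{\beta^*}}]\geq I_{\bm{\beta^*}}$ for every point on the trajectory.

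The gap is in the final quantitative step of Part 2. You invoke ``a Wald-identity argument with the standard overshoot-negligibility control,'' but under a time-varying trajectory $\bm{S}$ the post-change increments $\ell_{\bm{\beta^*}}(X^n[j])$ are independent yet \emph{not} identically distributed, so the classical renewal/overshoot machinery does not apply off the shelf: there is no single increment law, and bounding $\mE[\text{overshoot}]=o(b)$ uniformly over all trajectories requires a uniform integrability or uniform moment condition across the $K$ mixture laws $\widetilde{\mP}^k$ that you never state. You flag this as ``the most delicate piece'' but leave it unresolved, and it is precisely the point where the proof could fail. The paper avoids renewal theory entirely: it sets $n_b=b/(I_{\bm{\beta^*}}-\epsilon)$, writes $\widetilde{\mE}^{\bm{S},0}[T_{\bm{\beta^*}}/n_b]\leq 1+\sum_t\widetilde{\mP}^{\bm{S},0}(T_{\bm{\beta^*}}>tn_b)$, bounds each tail by the probability that every length-$n_b$ block has normalized sum below $I_{\bm{\beta^*}}-\epsilon$, and controls each block probability by Chebyshev's inequality under the explicit assumption $\max_k\widetilde{\mE}^k[\ell_{\bm{\beta^*}}(X^n)^2]<\infty$, summing the resulting geometric series. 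To complete your proof you would either need to adopt this block/Chebyshev argument or supply a nonstandard Wald-type bound valid for heterogeneous increments together with an explicit uniform second-moment (or uniform integrability) hypothesis; as written, the claimed $\frac{\log\gamma}{I_{\bm{\beta^*}}}(1+o(1))$ bound is asserted rather than proved.
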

% \begin{proof}[Proof sketch]
% 	Since $\mE^{k}_{\sigma^k}[\ell_{\bm{\beta^*}}(X^n)]$ is invariant for different $\sigma^k$'s, for any trajectory ${\bm{S}}$, we have that 
% 	\begin{flalign}
% 		{\mE}^{\bm{S}}_{\sigma_1^{S[1]},...,\sigma_i^{S[i]},...,\sigma_\infty^{S[\infty]}}[T_{\bm{\beta^*}}] = {\mE}^{\bm{S}}_{\sigma_1^{S[1]},...,\hat\sigma_i^{S[i]},...,\sigma_\infty^{S[\infty]}}[T_{\bm{\beta^*}}].\nn
% 	\end{flalign}
% 	By applying the symmetric property of $T_{\bm{\beta^*}}$ and Lemma \ref{lemma:1}, we have that when $b=\log\gamma$,
% 	\begin{flalign}
% 		&\text{WADD}(T_{\bm{\beta^*}})  \leq \frac{\log\gamma}{I_{\bm{\beta^*}}}(1+o(1)),\ \text{as}\ \gamma \rightarrow \infty, \nn\\& \text{WARL}(T_{\bm\beta^*}) \geq \gamma.
% 	\end{flalign}
% 	 
% \end{proof}
The proof of Theorem \ref{theorem:4} can be found in Appendix \ref{sec:upperwadds}.
The proof of Theorem \ref{theorem:4} is based on the Weak Law of Large Numbers for the weighted mixture likelihood ratio similarly to \cite{lai1998information}. The major challenge lies in that here we are interested in the worst-case label and the worst-case anomaly trajectory. Note that in our problem, the label and the affected sensor change with time. Therefore, it's challenging to explicitly characterize the worst-case label and anomaly trajectory for $T_{\bm{\beta^*}}$. To show the asymptotically optimal performance of $T_{\bm{\beta^*}}$, instead of finding the worst-case label and anomaly trajectory, we apply the symmetric property of $T_{\bm{\beta^*}}$ and Lemma \ref{lemma:1} to show that the WADD and WARL of $T_{\bm{\beta^*}}$ are bounded under all possible labels and trajectories.

We then establish the first-order asymptotic optimality of $T_{\bm\beta^*}$ in the following theorem.
\begin{theorem}\label{theorem:WMopt}
$T_{\bm\beta^*}$ is first-order asymptotically optimal for problem \eqref{eq:sgoal}.
\end{theorem}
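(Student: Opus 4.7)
The plan is to combine Theorem \ref{theorem:3} and Theorem \ref{theorem:4} in exactly the same manner that Theorem \ref{theorem:Gmopt} was obtained from Theorems \ref{theorem:1} and \ref{THEOREM:2}. The proof is a direct consequence of the two preceding theorems once one observes that both the upper bound on $\text{WADD}(T_{\bm\beta^*})$ and the universal lower bound on the infimum share the same leading coefficient $1/I_{\bm\beta^*}$.

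First I would set $b=\log\gamma$ in the definition of $T_{\bm\beta^*}$ given in \eqref{eq:wmixture}. By Theorem \ref{theorem:4}(1), this choice guarantees $\text{WARL}(T_{\bm\beta^*})\geq \gamma$, so $T_{\bm\beta^*}$ is feasible for the constrained minimization in \eqref{eq:sgoal}. In particular, the infimum in \eqref{eq:sgoal} is upper bounded by $\text{WADD}(T_{\bm\beta^*})$, and Theorem \ref{theorem:4}(2) then gives
\begin{equation*}
\inf_{\tau:\text{WARL}(\tau)\geq \gamma}\text{WADD}(\tau)\;\leq\;\text{WADD}(T_{\bm\beta^*})\;\leq\;\frac{\log\gamma}{I_{\bm\beta^*}}(1+o(1)).
\end{equation*}
Meanwhile, Theorem \ref{theorem:3} supplies the matching universal lower bound
\begin{equation*}
\inf_{\tau:\text{WARL}(\tau)\geq \gamma}\text{WADD}(\tau)\;\geq\;\frac{\log\gamma}{I_{\bm\beta^*}}(1+o(1)).
\end{equation*}
Taking the ratio of the upper and lower bounds yields
\begin{equation*}
\lim_{\gamma\to\infty}\frac{\text{WADD}(T_{\bm\beta^*})}{\inf_{\tau:\text{WARL}(\tau)\geq \gamma}\text{WADD}(\tau)}=1,
\end{equation*}
which is precisely the definition of first-order asymptotic optimality and completes the argument.

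There is essentially no obstacle at this stage, since the nontrivial work has already been carried out in proving Theorems \ref{theorem:3} and \ref{theorem:4}. The only conceptual point worth flagging is the reason the two bounds agree on the constant $1/I_{\bm\beta^*}$: the weight $\bm\beta^*=\arg\min_{\bm\beta}I_{\bm\beta}$ was chosen exactly so that, via Lemma \ref{lemma:1}, the post-change drift of the statistic $\widehat W[t]$ is at least $I_{\bm\beta^*}$ under every trajectory and label, matching the smallest attainable KL divergence that governs the minimax lower bound. Any other weight $\bm\beta\neq\bm\beta^*$ would give a post-change drift equal to $I_{\bm\beta}>I_{\bm\beta^*}$ in the lower bound derivation and would only yield first-order optimality in a strictly weaker sense, which is also consistent with the numerical comparison promised in Section \ref{sec:numerical}.
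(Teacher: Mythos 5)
Your proposal is correct and is exactly the paper's argument: the paper's proof of Theorem \ref{theorem:WMopt} simply combines Theorem \ref{theorem:3} (the matching universal lower bound) with Theorem \ref{theorem:4} (feasibility under $b=\log\gamma$ and the WADD upper bound), just as you do. The extra detail you supply about why the constants match is a faithful elaboration rather than a different route.
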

\begin{proof}
Combining Theorem \ref{theorem:3} and Theorem \ref{theorem:4}, we establish the first-order asymptotic optimality of $T_{\bm\beta^*}$.
\end{proof}
If we apply $T_{\bm{\beta^*}}$ (designed for the dyanmic setting) to the static setting, the WADD of $T_{\bm{\beta^*}}$ can also be upper bounded by $\frac{\log\gamma}{I_{\bm{\beta^*}}}(1+o(1))$. However, $T_{\bm{\beta^*}}$ may not be asymptotically optimal. % for the static setting because for the static setting, the optimal WADD is $\frac{\log\gamma}{I^*}(1+o(1))$. 
% However, when $I_{\bm{\beta^*}}$ and $I^*$ are close to each other, the performance of $T_G$ and $T_{\bm{\beta^*}}$ are similar. %Moreover, $T_{\bm{\beta^*}}$ tackles the unknown group index of the affected sensor using a Bayesian approach, and does not use the prior information that the affected sensor does not change with time, thus is not optimal for the static setting. 
On the other hand, in the dynamic setting, the sensor affected by the anomaly changes with time, and thus the MLE may not work well. Therefore, the weighted mixture CuSum algorithm works better than the GM-CuSum.

\section{Simulation Results}\label{sec:numerical}
% In this section, we provide some numerical results. 

We first consider the static setting. We show an example evolution path of the GM-CuSum algorithm. We set $n=2$ and $K=2$. For type \uppercase\expandafter{\romannumeral1} sensors, the pre- and post-change distributions are $\mathcal B(10,0.3)$ and $\mathcal B(10,0.4)$, respectively, where $\mathcal B$ denotes binomial distribution. For type \uppercase\expandafter{\romannumeral2} sensors, the pre- and post-change distributions are $\mathcal B(10,0.8)$ and $\mathcal B(10,0.6)$, respectively. We set the change point to be 500 and $b=20$. We plot one sample evolution path of the GM-CuSum algorithm when one sensor of type one is affected. It can be seen from Fig.~\ref{fig:1} that before the change point, the test statistic fluctuates around zero, and after the change point, it starts to increase with a positive drift. 

%\begin{figure}[htbp]
	%\centering	
%\end{figure}

\begin{figure*}[!t]
	\begin{multicols}{4}
		\includegraphics[width=\linewidth]{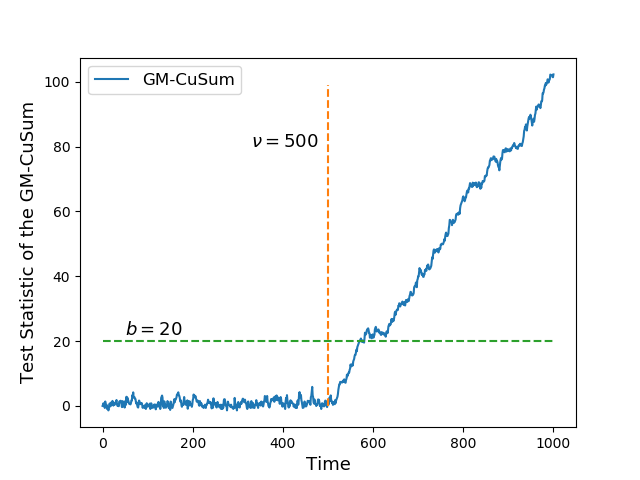}\par\caption{Evolution path of the GM-CuSum algorithm.}\label{fig:1}
		\includegraphics[width=\linewidth]{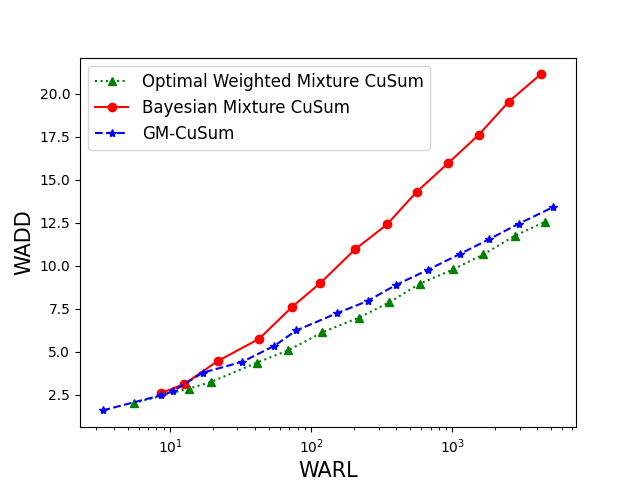}\par\caption{Comparison of the three algorithms in static setting: $n=4, K = 2$.}\label{fig:42}
		\includegraphics[width=\linewidth]{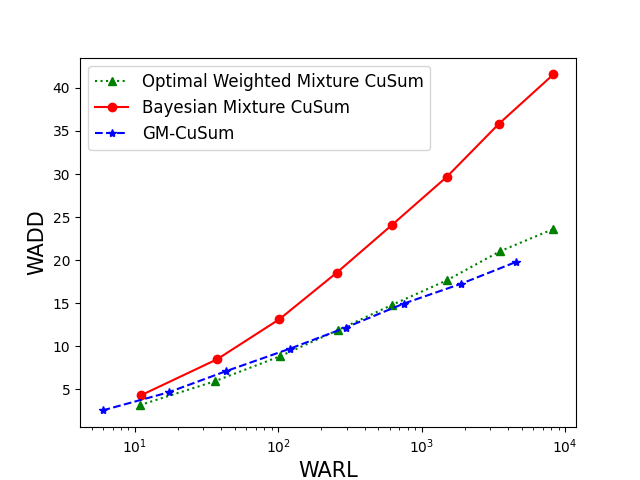}\par\caption{Comparison of the three algorithms in static setting: $n=8, K = 2$.}\label{fig:82}
		\includegraphics[width=\linewidth]{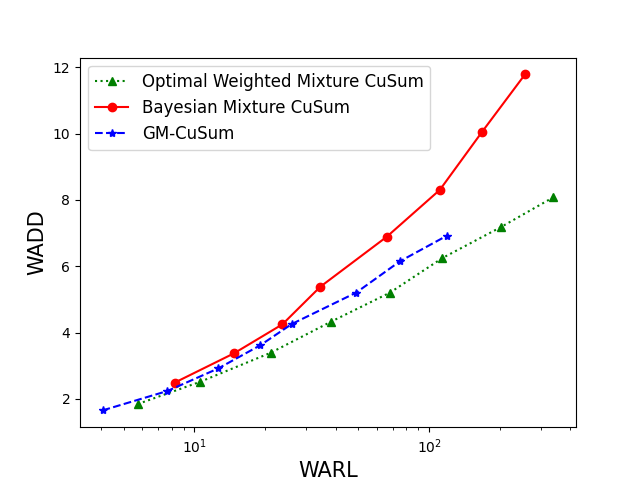}\par\caption{Comparison of the three algorithms in static setting: $n=4, K = 4$.}\label{fig:44}
	\end{multicols}
\end{figure*}

\begin{figure*}[!t]
	\begin{multicols}{4}
		\includegraphics[width=\linewidth]{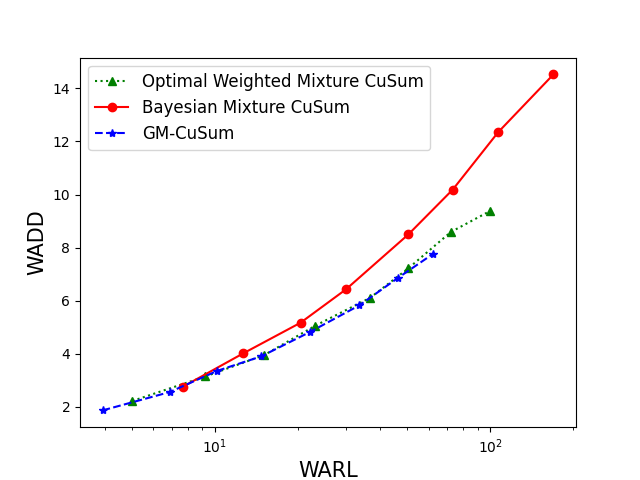}\par\caption{Comparison of the three algorithms in static setting: $n=8, K = 4$.}\label{fig:84}
		\includegraphics[width=\linewidth]{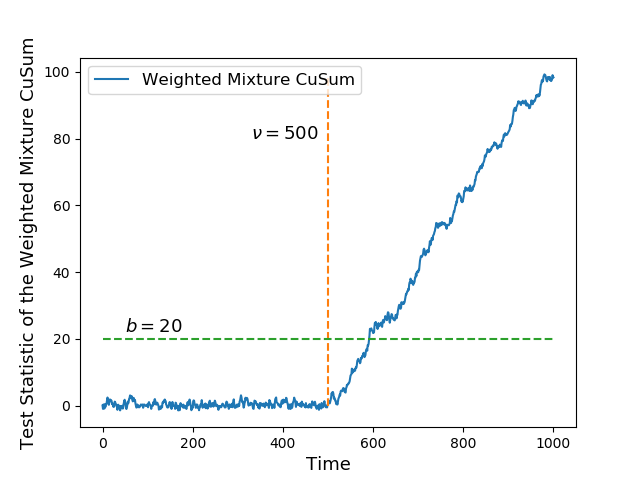}\par\caption{Evolution path of the weighted mixture CuSum algorithm.}\label{fig:3}
				\includegraphics[width=\linewidth]{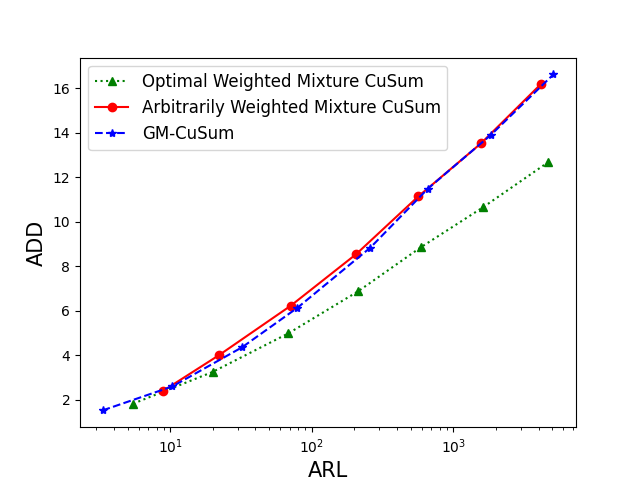}\par\caption{Comparison of the three algorithms in dynamic setting: $n=4, K = 2$.}\label{fig:d42}
		\includegraphics[width=\linewidth]{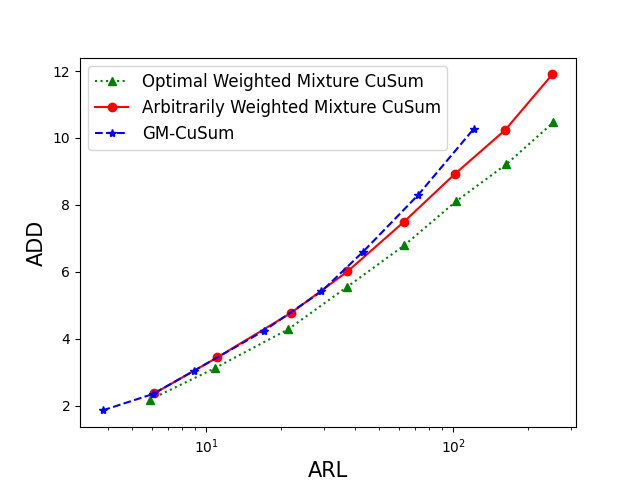}\par\caption{Comparison of the three algorithms in dynamic setting: $n=8, K = 2$.}\label{fig:d82}
	\end{multicols}
\end{figure*}

We then compare our GM-CuSum algorithm with a Bayesian mixture CuSum algorithm $T_B = \inf\Big\{t:\max_{1\leq j \leq t}\sum^t_{i = j}\log\frac{\frac{1}{|\mathcal{K}|}\sum_{k\in\mathcal{K}}\widetilde{\mP}^k(X^n[i])}{\widetilde{\mP}_0(X^n[i])}\geq b\Big\}$ and the optimal weighted mixture CuSum algorithm. We plot the WADD as a function of the WARL under the worst-case static trajectory. 

We consider four cases with different number of sensors and types. For the cases where there are two types of sensors,  for type \uppercase\expandafter{\romannumeral1} sensors, the pre- and post-change distributions are $\mathcal B(10,0.2)$ and $\mathcal B(10,0.5)$, for type \uppercase\expandafter{\romannumeral2} sensors, the pre- and post-change distributions are $\mathcal B(10,0.8)$ and $\mathcal B(10,0.6)$, respectively. We plot the figures for the cases where each type has two sensors and each type has four sensors in Fig.~\ref{fig:42} and Fig.~\ref{fig:82}, respectively. For the cases where there are four types of sensors, for type \uppercase\expandafter{\romannumeral1} sensors, the pre- and post-change distributions are $\mathcal B(10,0.2)$ and $\mathcal B(10,0.8)$, for type \uppercase\expandafter{\romannumeral2} sensors, the pre- and post-change distributions are $\mathcal B(10,0.3)$ and $\mathcal B(10,0.6)$, for type \uppercase\expandafter{\romannumeral3} sensors, the pre- and post-change distributions are $\mathcal B(10,0.5)$ and $\mathcal B(10,0.9)$, for type \uppercase\expandafter{\romannumeral4} sensors, the pre- and post-change distributions are $\mathcal B(10,0.4)$ and $\mathcal B(10,0.7)$ respectively. We plot the figures for the cases where each type has one sensor and each type has two sensors in Fig.~\ref{fig:44} and Fig.~\ref{fig:84}, respectively. We apply the Monte-Carlo approximation idea to obtain the optimal weight for our optimal weighted mixture CuSum algorithm. We repeat the experiment for 5000 times. 
% We note that the computationally complexities of our GM-CuSum algorithm and optimal weighted mixture CuSum algorithm increase almost exponentially with sample size $n$ \cite{sun2020tspanonymous}. In practice, when $n$ is large, \cite{sun2020tspanonymous} proposed a computationally efficient way to approximate the test statistic of mixture CuSum algorithm. Therefore, our proposed algorithms can be implemented efficiently in practice. 

It can be seen from Fig. \ref{fig:42}, Fig. \ref{fig:82}, Fig. \ref{fig:44} and Fig. \ref{fig:84} that our GM-CuSum outperforms the Bayesian algorithm and the performance of the optimal weighted mixture CuSum algorithm are close to the GM-CuSum algorithm. The simulation results show that our  optimal weighted mixture CuSum algorithm are also robust under the static setting. Moreover, the relationship between the WADD and log of the WARL is linear, which validates our theoretical results.
%\begin{figure}[htbp]
	%\centering
	%\includegraphics[width=3.2in]{}
	%\caption{Comparison of the Generalized Mixture CuSum Algorithm, a Bayesian Mixture Algorithm and the Optimal Weighted CuSum Algorithm: Static Anomaly.}
	%\label{fig:2}
%\end{figure}

We then consider the dynamic anomaly. We use the same parameters of distributions as in the static setting. We first show an evolution path of the weighted mixture CuSum algorithm under a random trajectory ${\bm{S}}$ in Fig. \ref{fig:3}. To generate this random trajectory ${\bm{S}}$, at each time step, let the probability that one sensor of type \uppercase\expandafter{\romannumeral1} is affected be $0.8$ and the probability that one sensor of type \uppercase\expandafter{\romannumeral2} is affected be $0.2$. Similar to the GM-CuSum, before the change point, the test statistic fluctuates around zero, and after the change point, it starts to increase with a positive drift.
%\begin{figure}[htbp]
	%\centering
%\end{figure}

\begin{figure*}[!t]
	\begin{multicols}{3}
		\centering
				\includegraphics[width=0.7\linewidth]{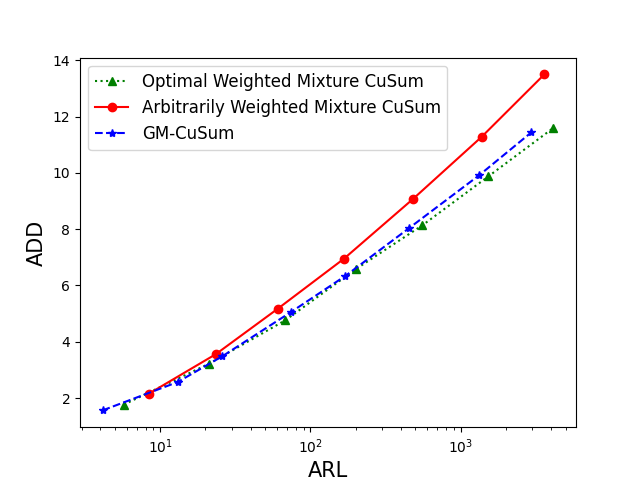}\par\caption{Comparison of the three algorithms in dynamic setting: $n=4, K = 4$.}\label{fig:d44}
		\includegraphics[width=0.7\linewidth]{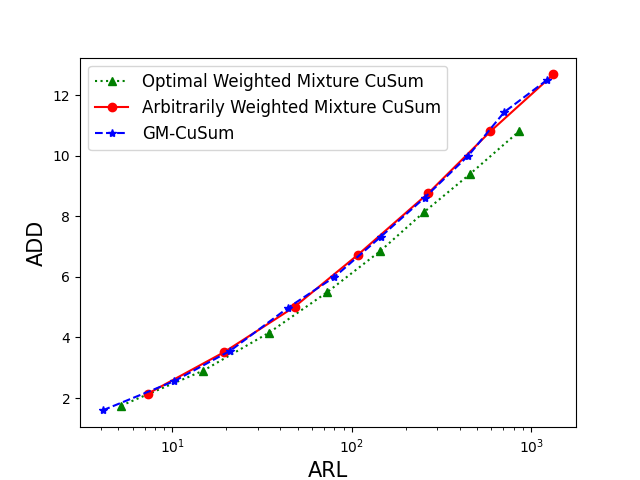}\par\caption{Comparison of the three algorithms in dynamic setting: $n=8, K = 4$.}\label{fig:d84}
		\includegraphics[width=0.7\linewidth]{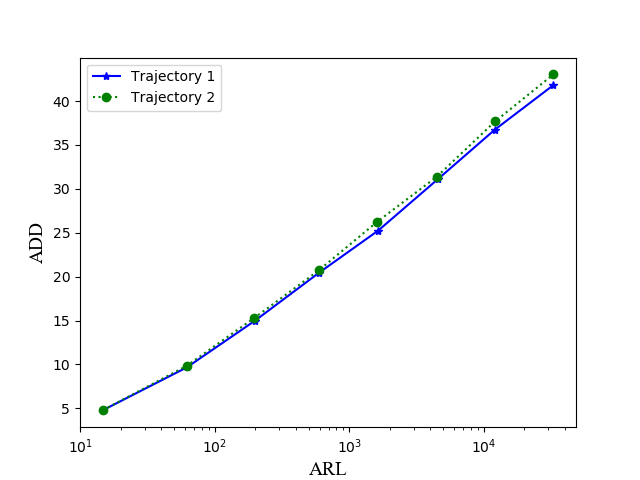}\par\caption{Optimal Weighted CuSum Algorithm under Different Trajectories.}
		\label{fig:5}
	\end{multicols}
\end{figure*}

We then compare our optimal weighted mixture CuSum algorithm with an arbitrary weighted mixture CuSum, i.e., replace $\bm{\beta^*}$ in \eqref{eq:wmixture} with some arbitrarily $\bm{\beta}$, e.g., $\bm{\beta} = (\frac{1}{2}, \frac{1}{2})$ for the case with two types % $T_b = \inf\Big\{t: \max\limits_{1\leq j\leq t+1}\sum\limits_{i=j}^t \log\frac{\sum_{k=1}^{2}\frac{1}{2}\frac{1}{|\mathcal{S}_{n,\lambda_k}|}\sum_{\sigma^k \in \mathcal{S}_{n,\lambda_k}}\mP_{k,\sigma^k}(X^n[i])} {\frac{1}{\mid \msn\mid}\sum_{\sigma^0\in\msn}\mP_{0,\sigma^0}(X^n[i])}\geq b\Big\}$
and the GM-CuSum. Here, we plot the average detection delay (ADD) and the average run length (ARL) for some randomly generated trajectories.
It can be seen from Fig.~\ref{fig:d42}, Fig.~\ref{fig:d82}, Fig.~\ref{fig:d44} and Fig.~\ref{fig:d84} that our optimal weighted mixture CuSum algorithm outperforms the Bayesian weighted mixture CuSum algorithm and the GM-CuSum. The relationship between the WADD and log of the WARL is linear. It can also be observed that the GM-CuSum algorithm does not perform well under the dynamic setting.
%\begin{figure}[htbp]
	%\centering
	%\includegraphics[width=3.2in]{}
	%\caption{Comparison of the Optimal Weighted CuSum Algorithm, An Arbitrarily Weighted One and the GM-CuSum Algorithm: Dynamic Anomaly.}
	%\label{fig:4}
%\end{figure}

%\begin{figure}[htbp]
	%\centering	
%\end{figure}

We then compare the performance of our weighted mixture CuSum algorithm under two different trajectories. We choose $n = 2$ and $K = 2$. For type \uppercase\expandafter{\romannumeral1} sensors, the pre- and post-change distributions are $\mathcal B(10,0.3)$ and $\mathcal B(10,0.4)$, respectively. For type \uppercase\expandafter{\romannumeral2} sensors, the pre- and post-change distributions are $\mathcal B(10,0.8)$ and $\mathcal B(10,0.6)$, respectively. For trajectory 1, at each time, let the probability that one sensor of type one is affected be $0.8$ and the probability that one sensor of type two is affected be $0.2$. For trajectory 2, at each time, let the probability that one sensor of type one is affected be $0.2$ and the probability that one sensor of type two is affected be $0.8$. We plot the ADD as function of the ARL. It can be seen from Fig. \ref{fig:5} that for two different trajectories, our optimal weighted mixture CuSum algorithm have the same performance, which demonstrates the robustness of our optimal weighted mixture CuSum algorithm under different trajectories.

\section{conclusion}\label{sec:conclusion}
In this paper, we investigated the problem of quickest detection of an anomaly in  networks with unlabeled samples. We first investigated the case with a static anomaly. We used the MLE to estimate the type of the affected sensor. A GM-CuSum algorithm was proposed. We showed that it is second-order asymptotically optimal. We then extended our study to the case with a dynamic anomaly, that is, the affected sensor changes with time. We proposed a weighted mixture CuSum algorithm, and proved that it is first-order asymptotically optimal. Our approaches provide useful insights for general (sequential) statistical inference problems with unlabeled samples.

%\begin{appendices}
\appendices
\section{}\label{app:a}
Before the anomaly emerges, i.e., $t<\nu$, there are $n_k$ sensors in group $k$, $\forall 1\leq k\leq K$, and 0 sensors in group $k$, $\forall K<k\leq 2K$. Then, there are in total $\left(\substack{n\\n_1,\ldots,n_K}\right)$ possible $\sigma_t^{S[t]}$: $\{1,\ldots,n\}\rightarrow\{1,\ldots,K\}$ satisfying
$|\{i:\sigma_t^{S[t]}(i)=k\}|=n_k$, for any $k=1,\ldots,K$. We denote the collection of all such labels by $\mathcal S_{n,0}$.
After the anomaly emerges, i.e., $t\geq \nu$, one sensor of type $S[t]\neq 0$ is affected by anomaly. Therefore, the number of sensors in group $S[t]$ and $S[t]+K$ are $n_{S[t]}-1$ and 1 respectively. Then, there are $\left(\substack{n\\n_1,\ldots, n_{S[t]}-1, \ldots, n_K, 1}\right)$ possible $\sigma_t^{S[t]}$: $\{1,\ldots,n\}\rightarrow\{1,\ldots,K, S[t]+K\}$ satisfying
\begin{flalign*}
	|\{i:\sigma_t^{S[t]}(i)=k\}|=   \left\{\begin{array}{ll}
		n_k, &\text{ if } 1\leq k\leq K  \text{and}\ k\neq S[t],\\
		n_k-1, &\text{ if } k=S[t],\\
		1, &\text{ if } k=S[t]+K,\\
		0, &\text{ otherwise}.	
	\end{array}\right.
\end{flalign*} 
We then denote the collection of all such labels by $\mathcal S_{n,S[t]}$. 
%where $\mathbf\lambda_{S[t]}=\{n_1,\ldots, n_{S[t]}-1, \ldots, n_K, 1\}$.

Before the anomaly emerges, i.e., $t<\nu$, the samples $X^n[t]$ follows the distribution 
\begin{flalign}\label{eq:p0}
	\mP_{0,\sigma_t^{0}}(X^n[t])=\prod_{i=1}^np_{0,\sigma_t^{0}(i)}(X_i[t]),
\end{flalign}
for some unknown $\sigma_t^{0}\in\mathcal S_{n,0}.$  At time $t\geq\nu$, $X^n[t]$ follows the distribution  
\begin{flalign}\label{eq:p1}
	\mP^{S[t]}_{\sigma^{S[t]}_t}(X^n[t])\overset{\Delta}{=}&\prod\limits_{\substack{i:\sigma^{S[t]}_t(i) \leq K}}p_{0,\sigma^{S[t]}_t(i)}(X_i[t])\times \prod\limits_{\substack{i: \sigma_t^{S[t]}(i)> K}}p_{1,\sigma^{S[t]}_t(i)-K}(X_i[t]), 
\end{flalign}
for some unknown $\sigma^{S[t]}_t\in\mathcal S_{n,S[t]}$.

 % generated according to \eqref{eq:p0}, \eqref{eq:p1} and $\Omega_{\bm{S}}$.
\section{Proof of Theorem \ref{theorem:1}}\label{sec:kaddlow}
Consider a simple QCD problem with a pre-change distribution  $\widetilde{\mP}_0$ and a post-change distribution  $\widetilde{\mP}^k$, respectively. Define the $\widetilde{\text{WADD}}_k$ and $\widetilde{\text{ARL}}$ for any stopping rule $\tau$ as follows:
\begin{flalign}
\widetilde{\text{WADD}}_k(\tau)&=\sup\limits_{\nu\geq1}\text{esssup}\widetilde{\mE}^{k,\nu}[(\tau-\nu)^+|\widetilde{\mathbf X}^n[1,\nu-1]],\nn\\
\widetilde{\text{ARL}}(\tau)&=\widetilde{\mE}^\infty[\tau],
\end{flalign}
where $\widetilde{\mE}^{k,\nu}$ denotes the expectation when the change is at $\nu$, the pre- and post-change distributions are $\widetilde{\mP}_0$ and $\widetilde{\mP}^k$, and $\widetilde{\mathbf X}^n[t]$ for $1 \leq t\leq\nu-1$ are i.i.d.\ from $\widetilde{\mP}_0$, $\widetilde{\mE}^\infty$ denotes the expectation when there is no change and samples are generated according to $\widetilde{\mP}_0$.

For any $1\leq k\leq K$, consider another QCD problem with a pre-change distribution  $\mP_{0,\sigma_t^0}$ and a post-change distribution  $\mP^{k}_{\sigma_t^k}$, respectively. For this pair of pre- and post-change distributions, define the ${\text{WADD}}_k$ and ${\text{WARL}}$ for any stopping rule $\tau$ as follows:
\begin{flalign}\label{eq:fixk}
\text{WADD}_k(\tau) &= \sup_{\nu\geq 1}\sup\limits_{\Omega_k}\text{esssup}\mE^{k,\nu}_{\Omega_k}[(\tau-\nu)^+|{\mathbf X}^n[1,\nu-1]], \nn\\ \text{WARL}(\tau) &=\inf\limits_{\Omega} {\mE}^\infty_{\Omega}[\tau].
\end{flalign}

For any $1\leq k\leq K$ and any $\tau$ satisfying $\text{WARL}(\tau)\geq \gamma$, it can be shown that
\begin{flalign}
\text{WADD}(\tau)&= \sup_{k\in\mathcal{K}}\text{WADD}_k(\tau) \nn\\&\geq\sup_{\nu \geq 1}\sup_{\Omega_k}\text{esssup}\mE^{k,\nu}_{\Omega_k}[(\tau-\nu)^+|\mathbf X^n[1,\nu-1]]\nn\\&\geq \sup\limits_{\nu\geq1}\text{esssup}\widetilde{\mE}^{k,\nu}[(\tau-\nu)^+|\widetilde{\mathbf X}^n[1,\nu-1]]\nn\\& = \widetilde{\text{WADD}}_k(\tau).
\end{flalign}
The second inequality is due to the fact that for any $\tau$, $\text{WADD}_k(\tau)\geq \widetilde{\text{WADD}}_k(\tau)$ \cite[eq. (18)]{sun2020tspanonymous}. Similarly, we have that for any $\tau$, $\text{WARL}(\tau)\leq \widetilde{\text{ARL}}(\tau)$ \cite[eq. (18)]{sun2020tspanonymous}. It then follows that for any $k\in\mathcal{K}$,
\begin{flalign}
\inf_{\tau:\text{WARL}(\tau)\geq \gamma} \text{WADD}(\tau) &\geq \inf_{\tau:\widetilde{\text{ARL}}(\tau)\geq \gamma} \widetilde{\text{WADD}}_k(\tau)\nn\\&\geq\frac{\log\gamma}{I_k}+O(1),\ \text{as}\ \gamma \rightarrow \infty.
\end{flalign}
The last inequality is due to the universal lower bound on WADD for a simple QCD problem\cite{lai1998information}.
We then have that 
\begin{flalign}\label{eq:kmixlow}
\inf_{\tau:\text{WARL}(\tau)\geq \gamma} \text{WADD}(\tau) \geq \frac{\log\gamma}{I^*}+O(1),\ \text{as}\ \gamma\rightarrow\infty.
\end{flalign}

\section{Proof of Theorem \ref{THEOREM:2}}\label{sec:karlproof}
For any $m\geq 0$, let $r_0 = 0$ and define the stopping time 
\begin{flalign}
r_{m+1} = \inf\Big\{t>r_m: \sup\limits_k\sum\limits^t_{i=r_m+1}\log\frac{\widetilde{\mP}^k(X^n_i)}{\widetilde{\mP}_0(X^n_i)}\leq 0\Big\}.
\end{flalign}
 
For any permutation $\pi(X^n)=(X_{\pi(1)},X_{\pi(2)},\ldots,X_{\pi(n)})$, we have that  $\log\frac{\widetilde{\mP}^k(X^n)}{\widetilde{\mP}_0(X^n)}=\log\frac{\widetilde{\mP}^k(\pi(X^n))}{\widetilde{\mP}_0(\pi(X^n))}$. For any $\pi$, let $\hat{\sigma}^0=\sigma^0\circ\pi$, where ``$\circ$'' denotes the composition of two functions. Then $\mE_{0,\sigma^0}\Big[\log\frac{\widetilde{\mP}^k(\pi(X^n))}{\widetilde{\mP}_0(\pi(X^n))}\Big]=\mE_{0,\sigma^0\circ\pi}\Big[\log\frac{\widetilde{\mP}^k(X^n)}{\widetilde{\mP}_0(X^n)}\Big]=\mE_{0,\hat{\sigma}^0}\Big[\log\frac{\widetilde{\mP}^k(X^n)}{\widetilde{\mP}_0(X^n)}\Big]$. For any $\hat{\sigma}^0\in\msn$, a $\pi$ can always be found so that $\sigma^0\circ\pi=\hat{\sigma}^0$. Thus, for any $\sigma^0,\hat{\sigma}^0\in\msn$,
$
\mE_{0,\hat{\sigma}^0}\Big[\log\frac{\widetilde{\mP}^k(X^n)}{\widetilde{\mP}_0(X^n)}\Big]=\mE_{0,\sigma^0}\Big[\log\frac{\widetilde{\mP}^k(X^n)}{\widetilde{\mP}_0(X^n)}\Big].
$

We then have that for any $\sigma^0\in\msn$,
\begin{flalign}
&\mE_{0,\sigma^0}\Big[\frac{\widetilde{\mP}^k(X^n)}{\widetilde{\mP}_0(X^n)}\Big]\nn\\ &= \frac{1}{\mid\msn\mid}\sum\limits_{\sigma^0\in \msn}\mE_{0,\sigma^0}\Big[\frac{\widetilde{\mP}^k(X^n)}{\widetilde{\mP}_0(X^n)}\Big] \nn\\& = \frac{1}{\mid\msn\mid}\sum\limits_{\sigma^0\in \msn}\int \frac{\widetilde{\mP}^k(x^n)}{\widetilde{\mP}_0(x^n)}\cdot\mP_{0,\sigma^0}(x^n)\mathrm{d}x^n \nn\\& = \int \frac{\widetilde{\mP}^k(x^n)}{\widetilde{\mP}_0(x^n)}\cdot\widetilde{\mP}_0(x^n)\mathrm{d}x^n \nn\\& = \int \widetilde{\mP}^k(x^n)\mathrm{d}x^n = 1.
\end{flalign}
Therefore, for any $\Omega$ and $t>r_m$, 
\begin{flalign}
&\mE_{\Omega}^\infty\Bigg[\prod\limits_{i = r_m+1}^{t+1}\frac{\widetilde{\mP}^k(X^n_i)}{\widetilde{\mP}_0(X^n_i)}\Bigg|\mathcal{F}_t\Bigg]\nn\\&= \mE_{\Omega}^\infty\Bigg[\prod\limits_{i = r_m+1}^{t}\frac{\widetilde{\mP}^k(X^n_i)}{\widetilde{\mP}_0(X^n_i)}\cdot \frac{\widetilde{\mP}^k(X^n_{t+1})}{\widetilde{\mP}_0(X^n_{t+1})}\Bigg|\mathcal{F}_t\Bigg]\nn\\&= \mE_{\Omega}^\infty\Bigg[\prod\limits_{i = r_m+1}^{t}\frac{\widetilde{\mP}^k(X^n_i)}{\widetilde{\mP}_0(X^n_i)}\Bigg|\mathcal{F}_t\Bigg]\cdot \mE_{0,\sigma^0}\Bigg[ \frac{\widetilde{\mP}^k(X^n_{t+1})}{\widetilde{\mP}_0(X^n_{t+1})}\Bigg|\mathcal{F}_t\Bigg]\nn\\&= \prod\limits_{i = r_m+1}^{t}\frac{\widetilde{\mP}^k(X^n_i)}{\widetilde{\mP}_0(X^n_i)}\cdot\mE_{0,\sigma^0}\Bigg[\frac{\widetilde{\mP}^k(X^n_{t+1})}{\widetilde{\mP}_0(X^n_{t+1})}\Bigg]\nn\\&=\prod\limits_{i = r_m+1}^{t}\frac{\widetilde{\mP}^k(X^n_i)}{\widetilde{\mP}_0(X^n_i)}.
\end{flalign}
Therefore, $\Big\{\prod\limits_{i = r_m+1}^t\frac{\widetilde{\mP}^k(X^n_i)}{\widetilde{\mP}_0(X^n_i)},\mathcal{F}_t, t> r_m\Big\}$ is a martingale under $\mP_{\Omega}^\infty$ for any $\Omega$ with mean 1. 

We then have that for any $\Omega$,
\begin{flalign}
&\mP_\Omega^\infty\Bigg\{\sup\limits_k\sum\limits^t_{i=r_m+1}\log\frac{\widetilde{\mP}^k(X^n_i)}{\widetilde{\mP}_0(X^n_i)}\geq b\ \text{for some}\  t>r_m\Bigg|\mathcal{F}_{r_m}\Bigg\}\nn\\&\leq \sum\limits_{k=1}^K\mP_\Omega^\infty\Bigg\{\sum\limits^t_{i=r_m+1}\log\frac{\widetilde{\mP}^k(X^n_i)}{\widetilde{\mP}_0(X^n_i)}\geq b\ \text{for some}\  t>r_m\Bigg|\mathcal{F}_{r_m}\Bigg\}\nn\\&= \sum\limits_{k=1}^K\mP_\Omega^\infty\Bigg\{\prod\limits^t_{i=r_m+1}\frac{\widetilde{\mP}^k(X^n_i)}{\widetilde{\mP}_0(X^n_i)}\geq e^b\ \text{for some}\  t>r_m\Bigg|\mathcal{F}_{r_m}\Bigg\}\nn\\& \leq K\frac{\mE_{0,\sigma^0}\Big[\frac{\widetilde{\mP}^k(X^n_{r_m+1})}{\widetilde{\mP}_0(X^n_{r_m+1})}\Big]}{e^b} =Ke^{-b},
\end{flalign}
where the last inequality is due to Doob's submartingale inequality\cite{williams_1991} and the optional sampling theorem\cite{williams_1991}.

Let $M = \inf\Big\{m\geq 0: r_m<\infty \ \text{and} \  \sup\limits_k\sum\limits^t_{i=r_m+1}\log\frac{\widetilde{\mP}^k(X^n_i)}{\widetilde{\mP}_0(X^n_i)}\geq b \ \text{for some}\  t>r_m \Big\}$. We have that for any $\Omega$,
\begin{flalign}
&\mP_\Omega^\infty\big(M\geq m+1|\mathcal{F}_{r_m}\big)\nn\\& \geq \mP_\Omega^\infty\Bigg\{\sup\limits_k\sum\limits^t_{i=r_m+1}\log\frac{\widetilde{\mP}^k(X^n_i)}{\widetilde{\mP}_0(X^n_i)}< b\ \text{for all}\  t>r_m\Bigg|\mathcal{F}_{r_m}\Bigg\}\nn\\&\geq 1-Ke^{-b}.
\end{flalign}
We then have that for any $\Omega$,
\begin{flalign}
\mP_\Omega^\infty(M>m) &= \mE_\Omega^\infty\Big[\mP_\Omega^\infty\big(M\geq m+1|\mathcal{F}_{r_m}\big)\cdot\mathbbm{1}_{\{M\geq m\}}\Big]\nn\\&\geq (1-Ke^{-b})\mP_\Omega^\infty(M>m-1)\nn\\&\geq(1-Ke^{-b})^2\mP_\Omega^\infty(M>m-2) \nn\\&\geq(1-Ke^{-b})^m\mP_\Omega^\infty(M>0)\nn\\&=(1-Ke^{-b})^m.
\end{flalign}
It then follows that
\begin{flalign}
\text{WARL}(T_G)& = \inf\limits_{\Omega}\mE_\Omega^\infty[T_G]\geq \inf\limits_{\Omega} \mE_\Omega^\infty[M] \nn\\&\geq\inf\limits_{\Omega} \sum_{m=0}^{\infty}\mP_\Omega^\infty(M>m)\nn\\&\geq \sum_{m=0}^{\infty}(1-Ke^{-b})^m = \frac{e^b}{K}.
\end{flalign}
Let $b = \log K\gamma$, we have that $\text{WARL}(T_G) \geq \gamma$.
Let $T_k$ be the mixture CuSum algorithm for problem in \eqref{eq:fixk}:
\begin{flalign}
T_k = \inf\bigg\{t:\max\limits_{1\leq j \leq t}\sum\limits^t_{i = j}\log\frac{\widetilde{\mP}^k(X^n[i])}{\widetilde{\mP}_0(X^n[i])}\geq b\bigg\}.
\end{flalign}
It then follows that for any $1\leq k\leq K$,
\begin{flalign}
\text{WADD}_k(T_G) &= \sup_{\nu \geq 1}\sup_{\Omega_k}\text{esssup}\mE^{k,\nu}_{\Omega_k}[(T_G-\nu)^+|\mathbf X^n[1,\nu-1]]\nn\\&\leq \sup_{\nu \geq 1}\sup_{\Omega_k}\text{esssup}\mE^{k,\nu}_{\Omega_k}[(T_k-\nu)^+|\mathbf X^n[1,\nu-1]]\nn\\&\leq \frac{\log b}{I_k} + O(1),
\end{flalign}
where the last equality is because of the exact optimality of the mixture CuSum algorithm (see Theorem 1 in \cite{sun2020tspanonymous}).

To satisfy the WARL constraint, choose $b = \log {K\gamma}$, we then have that
\begin{flalign}\label{eq:kwadd}
\text{WADD}(T_G) & = \sup\limits_{k\in\mathcal K}\text{WADD}_k(T_G)\leq \sup\limits_{k\in\mathcal K}\text{WADD}_k(T_k)\nn\\& = \sup\limits_{k\in\mathcal K}\frac{\log K\gamma}{I_k} + O(1) \nn\\&= \frac{\log\gamma}{I^*} + \frac{\log K}{I^*} + O(1), \ \text{as}\  \gamma \rightarrow \infty.
\end{flalign}

\section{Proof of Theorem \ref{theorem:3}}\label{sec:lowwadds}
For any trajectory ${\bm{S}}$ and stopping time $\tau$, define the WADD and WARL
\begin{flalign}\label{eq:worstP}
	\text{WADD}_{\bm{S}}(\tau)&= \sup_{\nu\geq 1} \sup_{\Omega_{\bm{S}}}\text{esssup}\mE^{{\bm{S}},\nu}_{\Omega_{\bm{S}}}[(\tau-\nu)^+|\mathbf X^n[1,\nu-1]],\nn\\\text{ARL}(\tau) &= \inf_{\Omega}\mE_{\Omega}^\infty[\tau].
\end{flalign}
Consider QCD problem with a pre-change distribution  $\widetilde{\mP}_0=\frac{1}{\mid \msn\mid}\sum_{\sigma^0\in \msn}\mP_{0,\sigma^0}$ and a post-change distribution  $\widetilde{\mP}^{S[t]}=\frac{1}{\mid \mss\mid}\sum_{\sigma^{S[t]}\in \mss}\mP^{S[t]}_{\sigma^{S[t]}}$, respectively. For this pair of pre- and post-change distributions and any trajectory ${\bm{S}}$, define the $\widetilde{\text{WADD}}_{\bm{S}}$ and $\widetilde{\text{ARL}}_{\bm{S}}$ for any stopping rule $\tau$:
\begin{flalign}\label{eq:fixS}
	\widetilde{\text{WADD}}_{\bm{S}}(\tau) &= \sup_{\nu\geq 1}\text{esssup}\widetilde{\mE}^{\bm{S},\nu}[(\tau-\nu)^+|\widetilde{\mathbf X}^n[1,\nu-1]], \nn\\ \widetilde{\text{ARL}}(\tau) &= \widetilde{\mE}^\infty[\tau].
\end{flalign}
where $\widetilde{\mE}^{\bm{S},\nu}$ denotes the expectation when change point is $\nu$, before the change point, the data follows distribution $\widetilde{\mP}_0$ and after the change point, at time $t$, the data follows the distribution $\widetilde{\mP}^{S[t]}$, and $\widetilde{\mathbf X}^n[1,\nu-1]$ are i.i.d.\ from $\widetilde{\mP}_0$; and $\widetilde{\mE}^\infty$ denote the expectation when for any $t\geq 0$, the data follows distribution $\widetilde{\mP}_0$, i.e., $\nu=\infty$.

Consider another QCD problem with pre-change distribution $\widetilde{\mP}_0$ and post-change distribution $\widetilde{\mP}^{\bm{\beta^*}}$. Under this pair of pre- and post-change distributions, for any stopping time $\tau$, define worst-case average detection delay and average running length as follows:
\begin{flalign}\label{eq:fixb}
	\widetilde{\text{WADD}}_{\bm{\beta^*}}(\tau) &= \sup_{\nu\geq1}\text{esssup}\widetilde{\mE}^{\bm{\beta^*},\nu}[(\tau-\nu)^+|\widetilde{\mathbf X}^n[1,\nu-1]],\nn\\ \widetilde{\text{ARL}}(\tau) &= \widetilde{\mE}^\infty[\tau].
\end{flalign}
In QCD problems, ARL only depends on the pre-change distribution. Therefore, for any stopping time $\tau$, problems in \eqref{eq:warl} and \eqref{eq:worstP} have the same ARL, problems in \eqref{eq:fixS} and \eqref{eq:fixb} have the same ARL. Let $\mathcal{C}_\gamma$ denotes the collection of all stopping times $\tau$ that satisfy $\text{ARL}(\tau)\geq \gamma$ and $\widetilde{\mathcal{C}}_\gamma$ denotes the collection of all stopping times $\tau$ that satisfy $\widetilde{\text{ARL}}(\tau)\geq\gamma$.
Our goal is to prove that
\begin{flalign}\label{eq:dytheorem}
	\hspace{-0.26cm}\inf_{\tau\in \mathcal{C}_\gamma}\text{WADD}(\tau) \geq \inf_{\tau\in \widetilde{\mathcal{C}}_\gamma}\widetilde{\text{WADD}}_{\bm{\beta^*}}(\tau) \sim \frac{\log\gamma}{I_{\bm{\beta^*}}}(1+o(1)).
\end{flalign}

Construct a new sequence of random variables $\{\widehat{X}^n[t]\}_{t=1}^\infty$. Before the change point, $\widehat{X}^n[t]$ are i.i.d. according to the mixture distribution $\widetilde{\mP}_0=\frac{1}{\mid \msn\mid}\sum_{\sigma^0\in \msn}\mP_{0,\sigma^0}$. After the change point, i.e., $t\geq \nu$, $\widehat{X}^n[t]$ follows the distribution $\mP^{S[t]}_{\sigma^{S[t]}_t}$ for some $\sigma^{S[t]}_t\in\mss$. Specifically,
\begin{flalign}\label{eq:new}
	\widehat{X}^n[t] \sim \left\{\begin{array}{ll}
		\widetilde{\mP}_0, &\text{ if } t<\nu, \\
		\mP^{S[t]}_{\sigma^{S[t]}_t}, &\text{ if } t\geq \nu.
	\end{array}\right.
\end{flalign}

For any stopping time $\tau$ and any ${\bm{S}}$, define the worst-case average detection delay for the model in \eqref{eq:new} as follows:
\begin{flalign}
	\widehat{\text{WADD}_{\bm{S}}}(\tau)=&\sup\limits_{\nu\geq1}\sup_{\sigma_\nu^{S[\nu]},...,\sigma_\infty^{S[\infty]}}\text{esssup}{\widehat{\mE}}^{{\bm{S}},\nu}_{\sigma_\nu^{S[\nu]},...,\sigma_\infty^{S[\infty]}}[(\tau-\nu)^+|\widehat{\mathbf X}^n[1,\nu-1]],
\end{flalign}
where $\widehat{\mE}^{{\bm{S}},\nu}_{\sigma_\nu^{S[\nu]},...,\sigma_\infty^{S[\infty]}}$ denotes the expectation when the data is distributed according to \eqref{eq:new}.

Let $\widetilde{\text{WADD}}(\tau) = \sup\limits_{{\bm{S}}}\widetilde{\text{WADD}}_{\bm{S}}(\tau)$. To prove \eqref{eq:dytheorem}, we will first show that for any ${\bm{S}}$,  ${\text{WADD}_{\bm{S}}}(\tau) =\widehat{\text{WADD}_{\bm{S}}}(\tau)$, and then show that $\widehat{\text{WADD}_{\bm{S}}}(\tau) \geq \widetilde{\text{WADD}_{\bm{S}}}(\tau)$. We will then complete our proof by showing that for any $\tau$ and $\bm{\beta}$, $\widetilde{\text{WADD}}(\tau) \geq \widetilde{\text{WADD}}_{\bm{\beta}}(\tau)$.

\textbf{Step 1.} Denote by $\mathcal{M}$ the collection of all $\{\sigma_1^0,...,\sigma_{\nu-1}^0\}$, and $\mu$ is an element in $\mathcal{M}$. When the trajectory is ${\bm{S}}$, denote by $\mathcal{N}_{\bm{S}}$ the collection of all $\{\sigma_\nu^{S[\nu]},...,\sigma_\infty^{S[\infty]}\}$, and $\omega$ is an element in $\mathcal{N}_{\bm{S}}$. Then, the $\text{WADD}_{\bm{S}}$ can be written as
\begin{flalign*}
	\text{WADD}_{\bm{S}}(\tau)&= \sup\limits_{\nu\geq1}\sup_{\Omega_{\bm{S}}}\text{esssup}{\mE}^{{\bm{S}},\nu}_{\Omega_{\bm{S}}}[(\tau-\nu)^+|{\mathbf X}^n[1,\nu-1]]\nn \\&= \sup\limits_{\nu\geq1}\sup_{\omega\in\mathcal{N}_{\bm{S}}}\sup\limits_{\mu\in\mathcal{M}}\text{esssup}{\mE}^{{\bm{S}},\nu}_{\omega}[(\tau-\nu)^+|{\mathbf X}^n[1,\nu-1]],
\end{flalign*}
where $\mE^{{\bm{S}},\nu}_\omega$ denotes the expectation when change point is $\nu$, the trajectory is ${\bm{S}}$, and after the change point, the data follows distribution $\prod_{t=\nu}^\infty\mP^{S[t]}_{\sigma^{S[t]}_t}$.
We note that $\widehat{X}^n[t]$ and $X^n[t]$, for $t\geq\nu$, have the same distribution $\mP^{S[t]}_{\sigma^{S[t]}_t}$. Therefore, the difference between $\text{WADD}_{\bm{S}}$ and $\widehat{\text{WADD}_{\bm{S}}}$ lies in that they take esssup with respect to different distributions, i.e., the distributions of $\mathbf{X}^n[1,\nu-1]$ and $\mathbf{\widehat{X}}^n[1,\nu-1]$ are different. Let $f_{\omega}(\mathbf {X}^n[1,\nu-1])$ denote ${\mE}^{{\bm{S}},\nu}_{\omega}[(\tau-\nu)^+|\mathbf{X}^n[1,\nu-1]]$.
Then, $\text{WADD}_{\bm{S}}$ and $\widehat{\text{WADD}_{\bm{S}}}$ can be written as
\begin{flalign}
	\text{WADD}_{\bm{S}}(\tau) &= \sup\limits_{\nu\geq1}\sup_{\omega\in\mathcal{N}_{\bm{S}}}\sup_{\mu\in\mathcal{M}}\text{esssup} f_{\omega}({\mathbf X}^n[1,\nu-1]),\nn\\\widehat{\text{WADD}_{\bm{S}}}(\tau) &= \sup\limits_{\nu\geq1}\sup_{\omega\in\mathcal{N}_{\bm{S}}}\text{esssup} f_{\omega}(\widehat{\mathbf X}^n[1,\nu-1]).
\end{flalign}
It then suffices to show that for any $\omega \in \mathcal{N}_S$,
$
	\sup_{\mu\in\mathcal{M}}\text{esssup} f_{\omega}({\mathbf X}^n[1,\nu-1])=\text{ess}\sup f_{\omega}(\widehat{\mathbf X}^n[1,\nu-1]).\nn 
$
%Denote by $\mP_\Omega^\nu$ the probability when the change point is at $\nu$, and before the change point, the data follows distribution $\mP_{0,\sigma_1},...,\mP_{0,\sigma_{\nu-1}}$, after the change point, the data follows distribution $\mP_{1,\sigma_\nu},...,\mP_{1,\sigma_\infty}$. Denote by $\widehat{\mP}_\omega^\nu$ the probability when the data is distributed according to \eqref{eq:new}. 

For any $\omega \in \mathcal{N}_{\bm{S}}$ and $\mu \in \mathcal{M}$, let
\begin{flalign}
	b_{\omega,\mu} &= \text{esssup} f_\omega({\mathbf X}^n[1,\nu-1])\nn\\&=\inf\{b: \mP_{\mu}(f_{\omega}({\mathbf X}^n[1,\nu-1])>b)=0\},
\end{flalign}
where $\mP_\mu$ denotes the probability measure when the data is generated from $\mP_{0,\sigma_1^0},...,\mP_{0,\sigma_{\nu-1}^0}$ before change point $\nu$.

Let $b^*_\omega = \text{esssup} f_\omega(\widehat{\mathbf{X}}^n[1,\nu-1])$. It can be shown that
\begin{flalign}
	b^*_\omega &=
	\inf\bigg\{b:\int_{\textbf{x}^n[1,\nu-1]} \mathbbm{1}_{\{f_\omega(\textbf{x}^n[1,\nu-1])>b\}}\times\mathrm{d} \prod_{t=1}^{\nu-1}\widetilde{\mP}_0(x^n(t))=0\bigg\}
	\nn\\&=\inf\bigg\{b:\int_{\textbf{x}^n[1,\nu-1]} \mathbbm{1}_{\{f_\omega(\textbf{x}^n[1,\nu-1])>b\}}\times\mathrm{d} \prod_{t=1}^{\nu-1}\frac{1}{\mid \msn \mid}\sum_{\sigma_t^0 \in \msn}{\mP}_{0,\sigma_t^0}(x^n(t))=0\bigg\}
	\nn\\&=\inf\bigg\{b:\int_{\textbf{x}^n[1,\nu-1]} \mathbbm{1}_{\{f_\omega(\textbf{x}^n[1,\nu-1])>b\}}\times\mathrm{d}\frac{1}{\mid \mathcal{M}\mid}\sum_{\mu\in \mathcal{M}}{\mP}_{\mu}(\textbf{x}^n[1,\nu-1])=0\bigg\}
	\nn\\&=\inf\bigg\{b: \frac{1}{\mid \mathcal{M}\mid}\sum_{\mu\in \mathcal{M}}{\mP}_{\mu}(f_{\omega}({{\mathbf X}}^n[1,\nu-1])>b)=0\bigg\}.\nn
\end{flalign}
It then follows that for any $\mu\in\mathcal{M}$, and $\omega \in \mathcal{N}_S$,
$
	{\mP}_{\mu}(f_{\omega}({{\mathbf X}}^n[1,\nu-1])>b^*_\omega) = 0.
$
Therefore, for any $\mu\in\mathcal{M}$, we have that $b_{\omega,\mu}\leq b^*_\omega$. Then
\begin{flalign}\label{eq:achieve}
	\sup\limits_{\mu\in\mathcal{M}} b_{\omega,\mu} \leq b^*_\omega.
\end{flalign}

Conversely, for any $\mu\in\mathcal{M}$, we have that
$
	\mP_{\mu} \Big(f_{\omega}({\mathbf X}^n[1,\nu-1])>\sup\limits_{\mu\in\mathcal{M}} b_{\omega,\mu}\Big)= 0.
$
Then,
$
	\frac{1}{\mid \mathcal{M}\mid}\sum_{\mu \in \mathcal{M}}\mP_{\mu}\Big(f_{\omega}({\mathbf X}^n[1,\nu-1])>\sup\limits_{\mu\in\mathcal{M}} b_{\omega,\mu}\Big) = 0.
$
This further implies that 
\begin{flalign}\label{eq:converse}
	b^*_\omega \leq \sup\limits_{\mu\in\mathcal{M}} b_{\omega,\mu}.
\end{flalign}

Combining \eqref{eq:achieve} and \eqref{eq:converse}, we have that $\sup_{\mu\in\mathcal{M}} b_{\omega,\mu}=b^*_\omega,$
and thus
$
	\sup_{\mu\in\mathcal{M}}\text{esssup} f_{\omega}({\mathbf X}^n[1,\nu-1])= \text{esssup} f_{\omega}(\widehat{\mathbf X}^n[1,\nu-1]).\nn
$
This implies that for any $\tau$,
\begin{flalign}\label{eq:first}
	\text{WADD}_{\bm{S}}(\tau) = \widehat{\text{WADD}_{\bm{S}}}(\tau).
\end{flalign}

\textbf{Step 2.} The next step is to show that $\widehat{\text{WADD}}_{\bm{S}}(\tau) \geq \widetilde{\text{WADD}}_{\bm{S}}(\tau)$. We will first show that
$
	\sup_{\omega\in\mathcal{N}_{\bm{S}}}{\text{esssup} f_{\omega}(\widehat{\mathbf X}^n[1,\nu-1])}\geq {\text{esssup}\sup_{\omega\in\mathcal{N}_{\bm{S}}} f_{\omega}(\widehat{\mathbf X}^n[1,\nu-1])}.
$
Denote by $\widetilde{\mP}^\nu$ the probability measure when the change is at $\nu$, the pre- and post-change distributions are $\widetilde{\mP}_0$ and $\widetilde{\mP}^{S[t]}$ at time $t$, respectively. Let $\hat{b}=\sup_{\omega\in\mathcal{N}_{\bm{S}}}{\text{esssup} f_{\omega}(\widehat{\mathbf X}^n[1,\nu-1])}$.
For any $\omega\in\mathcal{N}_{\bm{S}}$, we have that
$
	\widetilde{\mP}^\nu\Big(f_{\omega}(\widehat{\mathbf X}^n[1,\nu-1])> \hat{b}\Big) = 0.
$
Since $\mathcal{N}_{\bm{S}}$ is countable, and a countable union of sets of measure zero has measure zero, we then have that 
\begin{flalign}
	&\widetilde{\mP}^\nu\Big(\sup\limits_{\omega\in\mathcal{N}_{\bm{S}}} f_{\omega}(\widehat{\mathbf X}^n[1,\nu-1])> \hat{b}\Big)\nn\\& \leq \widetilde{\mP}^\nu\Big(\cup_{\omega\in\mathcal{N}_{\bm{S}}}\big\{f_\omega(\widehat{\mathbf X}^n[1,\nu-1])>\hat{b}\big\}\Big) = 0.
\end{flalign}
Therefore,
\begin{flalign}\label{eq:supp}
	\hat{b}&=\sup\limits_{\omega\in\mathcal{N}_{\bm{S}}} \text{esssup} f_{\omega}(\widehat{\mathbf X}^n[1,\nu-1]) \nn\\&\geq \text{esssup}\sup\limits_{\omega\in\mathcal{N}_{\bm{S}}} f_{\omega}(\widehat{\mathbf X}^n[1,\nu-1]).
\end{flalign}

Before the change point $\nu$, $\widehat{X}^n[t]$ and $\widetilde{X}^n[t]$ follow the same distribution. For any $T\geq\nu+1$, we have that
\begin{flalign}
	&\sup_{\substack{\{\sigma_\nu^{S[\nu]},\cdots,\sigma_T^{S[T]}\}\\\in\mathcal S_{n,S[\nu]} \times,\cdots,\times \mathcal S_{n,S[T]}}}\sum_{t=\nu+1}^T (t-\nu)\mP^{{\bm{S}},\nu}_{\sigma_\nu^{S[\nu]},\cdots,\sigma_T^{S[T]}}(\tau=t|\widehat{\mathbf{X}}^n[1,\nu-1])\nn\\
	&\geq \sum_{t=\nu+1}^T (t-\nu)\frac{1}{\mid \mathcal S_{n,S[\nu]}\mid \times\cdots \times \mid\mathcal S_{n,S[T]} \mid}\sum\limits_{\substack{\{\sigma_\nu^{S[\nu]},\cdots,\sigma_T^{S[T]}\}\\\in\mathcal S_{n,S[\nu]} \times,\cdots,\times \mathcal S_{n,S[T]}}}\mP^{{\bm{S}},\nu}_{\sigma_\nu^{S[\nu]},\cdots,\sigma_T^{S[T]}}(\tau=t|\widehat{\mathbf{X}}^n[1,\nu-1])\nn\\&=\sum_{t=\nu+1}^T (t-\nu)\widetilde{\mP}^{\bm{S},\nu}(\tau=t|\widetilde{\mathbf{X}}^n[1,\nu-1]),
\end{flalign}
where $\mP_{\sigma_\nu^{S[\nu]},...,\sigma_T^{S[T]}}^{{\bm{S}},\nu}$ denotes the probability measure when change point is $\nu$, the trajectory is ${\bm{S}}$, the observations from time $\nu$ to time $T$ are generated according to ${\mP}^{S[\nu]}_{\sigma_\nu^{S[\nu]}},...,{\mP}^{S[T]}_{\sigma_T^{S[T]}}$.
As $T\rightarrow\infty$, we have that
\begin{flalign}\label{eq:ave}
	f_{\omega}(\widehat{\mathbf X}^n[1,\nu-1])\geq
	\widetilde{\mE}^{\bm{S},\nu}[(\tau-\nu)^+|\widetilde{\mathbf X}^n[1,\nu-1]],
\end{flalign}

From \eqref{eq:supp} and \eqref{eq:ave}, we have that
\begin{flalign}\label{eq:second}
	\widehat{\text{WADD}_{\bm{S}}}(\tau) &= \sup\limits_{\omega\in\mathcal{N}_{\bm{S}}} \text{esssup} f_{\omega}(\widehat{\mathbf X}^n[1,\nu-1])\nn\\&\geq \text{esssup}\widetilde{\mE}^{\bm{S},\nu}[(\tau-\nu)^+|\widetilde{\mathbf X}^n[1,\nu-1]]\nn\\&=
	\widetilde{\text{WADD}}_{\bm{S}}(\tau).
\end{flalign}

Combining \eqref{eq:first} and \eqref{eq:second}, it follows that
\begin{flalign}
	\text{WADD}_{\bm{S}}(\tau) = \widehat{\text{WADD}_{\bm{S}}}(\tau) \geq \widetilde{\text{WADD}_{\bm{S}}}(\tau). 
\end{flalign}
This holds for any trajectory ${\bm{S}}$. It then follows that
\begin{flalign}\label{eq:mixsig}
	\text{WADD}(\tau) &= \sup_{\nu\geq0}\sup_{\Omega_{\bm{S}}}\sup_{\bm{S}}\text{esssup}\mE^{{\bm{S}},\nu}_{\Omega_{\bm{S}}}[(\tau-\nu)^+|\mathbf X^n[1,\nu-1]]\nn\\&\geq \sup_{\nu\geq0}\sup_{\bm{S}} \text{esssup}\widetilde{\mE}^{\bm{S},\nu}[(\tau-\nu)^+|\widetilde{\mathbf X}^n[1,\nu-1]]\nn\\&=\widetilde{\text{WADD}}(\tau).
\end{flalign}
\textbf{Step 3.} The last step is to show that for any $\tau$ and any $\bm\beta$, $\widetilde{\text{WADD}}(\tau) \geq \widetilde{\text{WADD}}_{\bm{\beta}}(\tau)$.
Firstly, we will show that 
\begin{flalign}
	&\sup_{\bm{S}}\text{esssup} \widetilde{\mE}^{\bm{S}, \nu}\big[(\tau-\nu)^+|\widetilde{\mathbf X}^n[1,\nu-1]\big]\nn\\&\geq \text{esssup}\sup_{\bm{S}} \widetilde{\mE}^{\bm{S},\nu}\big[(\tau-\nu)^+|\widetilde{\mathbf X}^n[1,\nu-1]\big].
\end{flalign}
Let $c=\sup_{\bm{S}}\text{esssup} \widetilde{\mE}^{\bm{S},\nu}\big[(\tau-\nu)^+|\widetilde{\mathbf X}^n[1,\nu-1]\big]$. Denote by $\Lambda_{\bm{S}}$ the collection of all trajectory ${\bm{S}}$. For any ${\bm{S}}$, we have that
\begin{flalign}
	\widetilde{\mP}^\nu\Big(\widetilde{\mE}^{\bm{S},\nu}\big[(\tau-\nu)^+|\widetilde{\mathbf X}^n[1,\nu-1]\big]> c\Big) = 0.
\end{flalign}
Since $\Lambda_{\bm{S}}$ is countable, it then follows that 
\begin{flalign}
	&\widetilde{\mP}^\nu\Big(\sup\limits_{\bm{S}}\widetilde{\mE}^{\bm{S},\nu}\big[(\tau-\nu)^+|\widetilde{\mathbf X}^n[1,\nu-1]\big]> c\Big)\nn\\& \leq \widetilde{\mP}^\nu\Big(\cup_{{\bm{S}}\in\Lambda_{\bm{S}}}\big\{\widetilde{\mE}^{\bm{S},\nu}\big[(\tau-\nu)^+|\widetilde{\mathbf X}^n[1,\nu-1]\big]> c\big\}\Big) = 0.\nn
\end{flalign}
Therefore,
\begin{flalign}\label{eq:supp2}
	c&=\sup\limits_{{\bm{S}}} \text{esssup} \widetilde{\mE}^{\bm{S},\nu}\big[(\tau-\nu)^+|\widetilde{\mathbf X}^n[1,\nu-1]\big] \nn\\&\geq \text{esssup}\sup\limits_{{\bm{S}}} \widetilde{\mE}^{\bm{S},\nu}\big[(\tau-\nu)^+|\widetilde{\mathbf X}^n[1,\nu-1]\big].
\end{flalign}

For any $T\geq\nu+1$, we have that
\begin{flalign}
	&\sup_{{\bm{S}}}\sum_{t=\nu+1}^T (t-\nu)\widetilde{\mP}^{S[\nu],...,S[T]}(\tau=t|\widetilde{\mathbf{X}}^n[1,\nu-1])\nn\\&\geq \sum_{t=\nu+1}^T (t-\nu)\sum_{\{S[\nu],...,S[T]\}\in \Lambda_{\bm{S}}^{\bigotimes (T-\nu+1)}} \beta_{S[\nu]}\times\cdots\times\beta_{S[T]}\widetilde{\mP}^{S[\nu],...,S[T]}(\tau=t|\widetilde{\mathbf{X}}^n[1,\nu-1])\nn\\&=\sum_{t=\nu+1}^T (t-\nu)\widetilde{\mP}^{\bm\beta,\nu}(\tau=t|\widetilde{\mathbf{X}}^n[1,\nu-1]),
\end{flalign}
where $\widetilde{\mP}^{S[\nu],...,S[T]}$ denotes the probability measure when the trajectory is ${\bm{S}}$, the observations from time $\nu$ to time $T$ are generated according to $\widetilde{{\mP}}^{S[\nu]},...,\widetilde{\mP}^{S[T]}$.
As $T\rightarrow\infty$, we have
\begin{flalign}\label{eq:ave2}
	&\sup_{\bm{S}}\widetilde{\mE}^{\bm{S},\nu}[(\tau-\nu)^+|\widetilde{\mathbf X}^n[1,\nu-1]]\nn\\&\geq
	\widetilde{\mE}^{\bm{\beta},\nu}[(\tau-\nu)^+|\widetilde{\mathbf X}^n[1,\nu-1]].
\end{flalign}

From \eqref{eq:supp2} and \eqref{eq:ave2}, we have that
\begin{flalign}\label{eq:second2}
	\widetilde{\text{WADD}}(\tau) &= \sup\limits_{\bm{S}}\text{esssup} \widetilde{\mE}^{\bm{S},\nu}[(\tau-\nu)^+|\widetilde{\mathbf X}^n[1,\nu-1]]\nn\\&\geq \text{esssup}\widetilde{\mE}^{\bm{\beta},\nu}[(\tau-\nu)^+|\widetilde{\mathbf X}^n[1,\nu-1]]\nn\\&=
	\widetilde{\text{WADD}}_{\bm\beta}(\tau).
\end{flalign}
Combining \eqref{eq:mixsig} and \eqref{eq:second2}, we have that for any $\tau$ and $\bm\beta$ 
\begin{flalign}\label{eq:lowerbound}
	\text{WADD}(\tau) \geq \widetilde{\text{WADD}}(\tau)\geq \widetilde{\text{WADD}}_{\bm\beta}(\tau).
\end{flalign}

For any $T\geq 1$, we have that 
\begin{flalign}
	&\inf_{\substack{\{\sigma_1^0,...,\sigma_T^0\}\\\in\msn^{\bigotimes T}}}\sum_{t = 1}^T t\mP^\infty_{\sigma_1^0,...,\sigma_T^0}(\tau = t) \nn\\& \leq \sum_{t=1}^T t\frac{1}{\mid \msn \mid^{T}}\sum\limits_{\substack{\{\sigma_1^0,...,\sigma_T^0\}\\\in\msn^{\bigotimes T}}}\mP^\infty_{\sigma_1^0,...,\sigma_T^0}(\tau=t) \nn\\&=\sum_{t=1}^T t\widetilde{\mP}^\infty(\tau=t).
\end{flalign}
As $T\rightarrow \infty$, we have that $\text{ARL}(\tau) \leq \widetilde{\text{ARL}}(\tau)$.

Therefore, for any stopping time $\tau$ satisfying $\text{ARL}(\tau) \geq \gamma$, it will also satisfy $\widetilde{\text{ARL}}(\tau)\geq \gamma$. We then have that $\mathcal{C}_\gamma \subseteq \widetilde{\mathcal{C}}_\gamma$. 

Since \eqref{eq:lowerbound} holds for any $\bm{\beta}$, it holds for $\bm{\beta^*}$. Problem \eqref{eq:fixb} is a classical QCD problem. From the asymptotic lower bound analysis in \cite{lai1998information}, we have that for large $\gamma$,
\begin{flalign}\label{eq:dyupp}
	\inf_{\tau\in\mathcal{C}_\gamma}\text{WADD}(\tau)\geq\inf_{\tau\in\widetilde{\mathcal{C}}_\gamma} \widetilde{\text{WADD}}_{\bm\beta^*}(\tau)\sim \frac{\log\gamma}{I_{\bm{\beta^*}}}(1+o(1)).
\end{flalign}

\section{Proof of Lemma \ref{lemma:1}}\label{sec:lemma}
The minimization of $I_{\bm\beta}$ is to solve the following problem:
\begin{flalign}
	\inf_{\bm\beta} \quad I_{\bm\beta}\quad &\nn\\
	\text{s.t.} \quad-\beta_k &\leq 0, \ \text{for} \ k\in[1,K]\\
	\sum_{k=1}^K\beta_k -1 &= 0.\nn
\end{flalign}
This is a convex optimization problem with linear constraints. Define the Lagrange function $L(\bm{\beta}, \eta, \bm{\mu})$:
\begin{flalign}
	L(\bm{\beta}, \eta, \bm{\mu}) = I_{\bm\beta} + \eta \Big(\sum_{k=1}^K \beta_k-1\Big) -\sum_{k=1}^{K}\mu_k\beta_k.
\end{flalign}
The minimizer $\bm{\beta^*}$ satisfies the Karush–Kuhn–Tucker(KKT) conditions: $\mu_k,\ \beta_k^* \geq 0,\ \mu_k \beta_k^* = 0,\ \sum_{k=1}^K \beta_k^* -1 = 0$ and 
%\begin{flalign}
	%\left\{\begin{array}{rl}
		%\mu_k, \beta_k^* &\geq 0,\\
		%\mu_k \beta_k^* &= 0,\\
		%\sum_{k=1}^K \beta_k^* -1 &= 0, \end{array}\right.
%\end{flalign}
\begin{flalign}
	\frac{\partial L}{\partial \beta_k}|_{\bm{\beta^*}} = \widetilde {\mE}^k\Big[\log\frac{\widetilde{\mP}^{\bm{\beta^*}}(X^n)}{\widetilde{\mP}_0(X^n)}\Big] + 1 +\eta - \mu_k = 0,
\end{flalign}
where $\widetilde {\mE}^k$ denotes the expectation under the distribution $\widetilde \mP^k$.

When $\beta_k^* > 0$, we have $\mu_k = 0$. Therefore, for any $k, k' \in \mathcal{K}$ with $\beta_k^* , \beta_{k'}^*> 0$, we have 
\begin{flalign}
	\widetilde {\mE}^k\Big[\log\frac{\widetilde{\mP}^{\bm{\beta^*}}(X^n)}{\widetilde{\mP}_0(X^n)}\Big] = \widetilde {\mE}^{k'}\Big[\log\frac{\widetilde{\mP}^{\bm{\beta^*}}(X^n)}{\widetilde{\mP}_0(X^n)}\Big] = -(1+\eta).\nn
\end{flalign}
The set $\mathcal{K}$ can be divided into two disjoint parts $\mathcal{K}_1$ and $\mathcal{K}_2$. All $k$ in $\mathcal{K}_1$ satisfy $\beta_k^*>0$ while all $k$ in $\mathcal{K}_2$ have $\beta_k^*=0$. 
We have that
\begin{flalign}\label{eq:interior}
	I_{\bm{\beta^*}} &= \sum_{k=1}^K \beta_k^* \widetilde {\mE}^k\Big[\log\frac{\widetilde{\mP}^{\bm{\beta^*}}(X^n)}{\widetilde{\mP}_0(X^n)}\Big]\nn\\&= \sum_{k\in K_1}\beta_k^* \widetilde {\mE}^k\Big[\log\frac{\widetilde{\mP}^{\bm{\beta^*}}(X^n)}{\widetilde{\mP}_0(X^n)}\Big]\nn\\& = \widetilde {\mE}^k\Big[\log\frac{\widetilde{\mP}^{\bm{\beta^*}}(X^n)}{\widetilde{\mP}_0(X^n)}\Big], \ \text{for all} \ k \in \mathcal{K}_1.
\end{flalign}
For all $k \in \mathcal{K}_2$, $\beta_k^*=0$. By the KKT conditions, we have that $\mu_k \geq 0$.
Therefore, for any $k\in \mathcal{K}_2$,
$
	\widetilde {\mE}^k\Big[\log\frac{\widetilde{\mP}^{\bm{\beta^*}}(X^n)}{\widetilde{\mP}_0(X^n)}\Big] + 1 +\eta = \mu_k \geq 0.
$
We then have that for any $k \in \mathcal{K}_2$, 
\begin{flalign}\label{eq:boundary}
	\widetilde {\mE}^k\Big[\log\frac{\widetilde{\mP}^{\bm{\beta^*}}(X^n)}{\widetilde{\mP}_0(X^n)}\Big] \geq I_{\bm{\beta^*}}.
\end{flalign}

\section{Proof of Theorem \ref{theorem:4}}\label{sec:upperwadds}
Due to the fact that the test statistic $\max_{1\leq k\leq t+1}\sum_{i=k}^t \ell_{\bm\beta^*}(X^n_i)$ has initial value 0 and remains non-negative, the delay is largest when the change happens at $\nu = 0$. Therefore, for any ${\bm{S}}$, we have that
\begin{flalign}
	\text{WADD}_{\bm{S}}(T_{\bm{\beta^*}}) &= \sup_{\nu\geq 0}\sup_{\Omega_{\bm{S}}}\text{esssup}\mE_{\Omega_{\bm{S}}}^{{\bm{S}},\nu}\big[(T_{\bm{\beta^*}}-\nu)^+\mid \mathbf X^n[1,\nu-1]\big]\nn\\&=\sup_{\Omega_{\bm{S}}}\mE^{{\bm{S}},0}_{\Omega_{\bm{S}}}[T_{\bm{\beta^*}}].
\end{flalign}
For any $T\geq\nu+1$, we have that
\begin{flalign}
	&\sup_{\substack{\{\sigma_1^{S[1]},\cdots,\sigma_T^{S[T]}\}\\\in\mathcal S_{n,S[1]} \times,\cdots,\times \mathcal S_{n,S[T]}}}\sum_{t=1}^T t\mP_{\sigma_1^{S[1]},\cdots,\sigma_T^{S[T]}}^{{\bm{S}},0}(T_{\bm{\beta^*}}=t)\nn\\&
	= \sum_{t=1}^T t\frac{1}{\mid \mathcal S_{n,S[1]} \mid\times\cdots \times \mid\mathcal S_{n,S[T]}\mid}\sum\limits_{\substack{\{\sigma_1^{S[1]},\cdots,\sigma_T^{S[T]}\}\\\in\mathcal S_{n,S[1]} \times,\cdots,\times \mathcal S_{n,S[T]}}}\mP_{\sigma_1^{S[1]},\cdots,\sigma_T^{S[T]}}^{{\bm{S}},0}(T_{\bm{\beta^*}}=t)\nn\\&
	=\sum_{t=1}^T t\widetilde{\mP}^{\bm{S},0}(T_{\bm{\beta^*}}=t).
\end{flalign}
As $T\rightarrow\infty$, we have that
\begin{flalign}\label{eq:equal}
	\sup_{\Omega_{\bm{S}}}\mE^{{\bm{S}},0}_{\Omega_{\bm{S}}}[T_{\bm{\beta^*}}] = \widetilde{\mE}^{\bm{S},0}[T_{\bm{\beta^*}}]= \widetilde{\text{WADD}_{\bm{S}}}(T_{\bm{\beta^*}}). 
\end{flalign}
For any ${\bm{S}}$, we have $\text{WADD}_{\bm{S}}(T_{\bm{\beta^*}}) = \widetilde{\text{WADD}_{\bm{S}}}(T_{\bm{\beta^*}})$. Therefore, $\text{WADD}(T_{\bm{\beta^*}}) = \widetilde{\text{WADD}}(T_{\bm{\beta^*}})$ by taking sup over ${\bm{S}}$ on both sides.
It then follows that
\begin{flalign}
	&\text{WADD}(T_{\bm{\beta^*}})=\widetilde{\text{WADD}}(T_{\bm{\beta^*}}) = \sup_{\bm{S}}\widetilde{\mE}^{\bm{S},0}[T_{\bm{\beta^*}}].
\end{flalign}
Let $0<\epsilon<I_{\bm{\beta^*}}$ and $n_b = \frac{b}{I_{\bm{\beta^*}}-\epsilon}$. For any trajectory ${\bm{S}}$, from the sum-integral inequality, we have that
\begin{flalign}\label{eq:part1}
	\widetilde{\mE}^{\bm{S},0}\Big[\frac{T_{\bm{\beta^*}}}{n_b}\Big] &= \int_0^\infty \widetilde{\mP}^{\bm{S},0}\Big(\frac{T_{\bm{\beta^*}}}{n_b}>x\Big)\mathrm{d}x \nn\\&\leq \sum^\infty_{t=1}\widetilde{\mP}^{\bm{S},0}(T_{\bm{\beta^*}}>tn_b) + 1.
\end{flalign}
For any ${\bm{S}}$, we have that
\begin{flalign}
	&\widetilde{\mP}^{\bm{S},0}(T_{\bm{\beta^*}}>tn_b)\nn\\&= \widetilde{\mP}^{\bm{S},0}\bigg(\max_{1\leq k\leq tn_b}\max_{1\leq i\leq k}\sum_{j=i}^{k}\ell_{\bm{\beta^*}}(X^n_j)<b\bigg)\nn\\&\leq \widetilde{\mP}^{\bm{S},0}\bigg(\max_{1\leq i\leq mn_b}\sum_{j=i}^{mn_b}\ell_{\bm{\beta^*}}(X^n_j)<b, \forall m \in [t]\bigg)\nn\\&\leq \widetilde{\mP}^{\bm{S},0}\bigg(\sum_{j=(m-1)n_b+1}^{mn_b}\ell_{\bm{\beta^*}}(X^n_j)<b, \forall m \in [t]\bigg)\nn\\& = \widetilde{\mP}^{\bm{S},0}\Bigg(\frac{\sum\limits_{j=(m-1)n_b+1}^{mn_b}\ell_{\bm{\beta^*}}(X^n_j)}{n_b}<I_{\bm{\beta^*}}-\epsilon, \forall m \in [t]\Bigg)\nn\\&= \prod_{m=1}^t \widetilde{\mP}^{\bm{S},0}\Bigg(\frac{\sum\limits_{j=(m-1)n_b+1}^{mn_b}\ell_{\bm{\beta^*}}(X^n_j)}{n_b}<I_{\bm{\beta^*}}-\epsilon\Bigg).
\end{flalign}
It then follows that 
\begin{flalign}
	&\sup_{\bm{S}}\sum_{t=1}^{\infty}\widetilde{\mP}^{\bm{S},0}(T_{\bm{\beta^*}}>tn_b) \nn\\&\leq \sup_{\bm{S}}\sum_{t=1}^\infty \prod_{m=1}^t\widetilde{\mP}^{\bm{S},0}\bigg(\frac{\sum\limits_{j=(m-1)n_b+1}^{mn_b}\ell_{\bm{\beta^*}}(X^n_j)}{n_b}<I_{\bm{\beta^*}}-\epsilon\bigg).\nn
\end{flalign}
Then we will bound $\widetilde{\mP}^{\bm{S},0}\bigg(\frac{\sum\limits_{j=(m-1)n_b+1}^{mn_b}\ell_{\bm{\beta^*}}(X^n_j)}{n_b}<I_{\bm{\beta^*}}-\epsilon\bigg)$.

Let $I_{{\bm{S}}_m} = \widetilde{\mE}^{\bm{S},0}\bigg[\frac{\sum\limits_{j=(m-1)n_b+1}^{mn_b}\ell_{\bm\beta^*}(X^n_j)}{n_b}\bigg]$. From \eqref{eq:interior} and \eqref{eq:boundary}, we have that
\begin{flalign}
	I_{{\bm{S}}_m} &= \widetilde{\mE}^{\bm{S},0}\Bigg[\frac{\sum\limits_{j=(m-1)n_b+1}^{mn_b}\ell_{\bm\beta^*}(X^n_j)}{n_b}\Bigg] \nn\\&= \sum\limits_{j=(m-1)n_b+1}^{mn_b}\widetilde{\mE}^{S[j]}\bigg[\frac{\ell_{\bm\beta^*}(X^n_j)}{n_b}\bigg] \nn\\&= \frac{1}{n_b}\sum\limits_{j=(m-1)n_b+1}^{mn_b}\widetilde{\mE}^{S[j]}\big[\ell_{\bm\beta^*}(X^n_j)\big]\geq I_{\bm{\beta^*}}.
\end{flalign}
It then follows that for any ${\bm{S}}$ and $m$
\begin{flalign}
	&\widetilde{\mP}^{\bm{S},0}\Bigg(\frac{\sum\limits_{j=(m-1)n_b+1}^{mn_b}\ell_{\bm{\beta^*}}(X^n_j)}{n_b}<I_{\bm{\beta^*}}-\epsilon\Bigg)\nn\\&\leq \widetilde{\mP}^{\bm{S},0}\Bigg(\frac{\sum\limits_{j=(m-1)n_b+1}^{mn_b}\ell_{\bm{\beta^*}}(X^n_j)}{n_b}<I_{{\bm{S}}_m}-\epsilon\Bigg)\nn\\&\leq \widetilde{\mP}^{\bm{S},0}\Bigg(\Bigg|\frac{\sum\limits_{j=(m-1)n_b+1}^{mn_b}\ell_{\bm{\beta^*}}(X^n_j)}{n_b}-I_{{\bm{S}}_m}\Bigg|>\epsilon\Bigg).
\end{flalign}
Assume that $\max_{k\in[1,K]}\widetilde{\mE}^k\bigg[\ell_{\bm{\beta^*}}(X^n)^2\bigg]<\infty.$
Let $\sigma^2=\max_{k\in [1,K]}\text{Var}_{\widetilde{\mP}^k}(\ell_{\bm{\beta^*}}(X^n))$ where $\text{Var}_{\widetilde{\mP}^k}$ denotes the variance under the distribution $\widetilde{\mP}^k$.
By Chebychev's inequality, 
\begin{flalign}\label{eq:part2}
	&\widetilde{\mP}^{\bm{S},0}\Bigg(\Bigg|\frac{\sum\limits_{j=(m-1)n_b+1}^{mn_b}\ell_{\bm{\beta^*}}(X^n_j)}{n_b}-I_{{\bm{S}}_m}\Bigg|>\epsilon\Bigg) \nn\\&\leq \text{Var}_{\widetilde{\mP}^{\bm{S}}}\Bigg(\frac{\sum\limits_{j=(m-1)n_b+1}^{mn_b}\ell_{\bm{\beta^*}}(X^n_j)}{n_b}\Bigg)\frac{1}{\epsilon^2}\nn\\& = \frac{1}{\epsilon^2n_b^2}\sum_{j=(m-1)n_b+1}^{mn_b}\text{Var}_{\widetilde{\mP}^{S[j]}}(\ell_{\bm{\beta^*}}(X^n_j))\nn\\&\leq \frac{\sum\limits_{j=(m-1)n_b+1}^{mn_b}\sigma^2}{n_b^2\epsilon^2} = \frac{\sigma^2}{n_b\epsilon^2}.
\end{flalign}
Let $\delta = \frac{\sigma^2}{n_b\epsilon^2}$. From \eqref{eq:part1} and \eqref{eq:part2}, we have that
\begin{flalign}
	\sup_{\bm{S}}\widetilde{\mE}^{\bm{S},0}\Big[\frac{T_{\bm{\beta^*}}}{n_b}\Big]&\leq 1 + \sup_{\bm{S}}\sum^\infty_{t=1}\widetilde{\mP}^{\bm{S},0}(T_{\bm{\beta^*}}>tn_b)\nn\\&\leq 1+ \sum_{t=1}^{\infty}(\frac{\sigma^2}{n_b\epsilon^2})^t\nn\\& = 1 + \sum_{t=1}^{\infty}\delta^t = \frac{1}{1-\delta}.
\end{flalign}
Therefore, we have 
\begin{flalign}\label{eq:upper}
	\sup_{\bm{S}}\widetilde{\mE}^{\bm{S},0}\Big[T_{\bm{\beta^*}}\Big]\leq \frac{b}{(I_{\bm{\beta^*}}-\epsilon)(1-\delta)}.
\end{flalign}
\eqref{eq:upper} holds for all $\epsilon$. It then follows that as $b \rightarrow \infty$, 
\begin{flalign}\label{eq:dylo}
	\text{WADD}(T_{\bm{\beta^*}}) = \sup_{\bm{S}}\widetilde{\mE}^{\bm{S},0}\Big[T_{\bm{\beta^*}}\Big] \leq \frac{b}{I_{\bm{\beta^*}}}(1+o(1)).
\end{flalign}
For the ARL lower bound, for any $T\geq 1$, we have that 
\begin{flalign}
	&\inf_{\substack{\{\sigma_1^0,...,\sigma_T^0\}\\\in\msn^{\bigotimes T}}}\sum_{t = 1}^T t\mP^\infty_{\sigma_1^0,...,\sigma_T^0}(T_{\bm\beta^*} = t) \nn\\& = \sum_{t=1}^T t\frac{1}{\mid \msn \mid^{T}}\sum\limits_{\substack{\{\sigma_1^0,...,\sigma_T^0\}\\\in\msn^{\bigotimes T}}}\mP^\infty_{\sigma_1^0,...,\sigma_T^0}(T_{\bm\beta^*}=t) \nn\\&=\sum_{t=1}^T t\widetilde{\mP}^\infty(T_{\bm\beta^*}=t).
\end{flalign}
As $T \rightarrow \infty$, we have that 
$
	\text{WARL}(T_{\bm\beta^*}) = \widetilde{\text{ARL}}(T_{\bm\beta^*}).
$
$T_{\bm\beta^*}$ is the CuSum algorithm for a simple QCD problem with pre-change distribution $\widetilde{\mP}_0$ and post-change distribution $\widetilde{\mP}^{\bm{\beta^*}}$. From the optimal property of CuSum algorithm in \cite{lorden1971procedures} and \cite{moustakides1986optimal}, we have that when $b = \log\gamma$,
$
	\text{WARL}(T_{\bm\beta^*}) = \widetilde{\text{ARL}}(T_{\bm\beta^*}) \geq \gamma.
$
\normalem
\bibliographystyle{ieeetr}
\bibliography{QCD}
%\end{appendices}
\end{document}